\documentclass[11pt,a4paper,notcite,notref]{article}

\usepackage{amssymb,amsmath}
\usepackage{tipa}
\usepackage{epic}
\usepackage{amsfonts}
\usepackage{accents}
\usepackage{texdraw}
\usepackage{color}
\usepackage{eucal}
\usepackage{cite}
\usepackage[all]{xy}
\usepackage{bm}
\evensidemargin=0pt \oddsidemargin=0pt
\usepackage{graphics,amsmath,amssymb,amsthm,accents}
\textheight=22cm \textwidth=16cm \oddsidemargin=0pt
\evensidemargin=0pt
\def \h#1{\widehat{#1}}
\def \t#1{\widetilde{#1}}

\newtheorem{lem}{Lemma}
\newtheorem{Theorem}{Theorem}
\topmargin=0pt \overfullrule=0pt
\newtheorem{Proposition}{Proposition}
\numberwithin{equation}{section}

\def \tyb#1{\hbox{\tiny{[{\it{#1}}]}}}
\def \ty#1{\hbox{\tiny{{\it{#1}}}}}

\def \h#1{\widehat{#1}}
\def \t#1{\widetilde{#1}}

\def \th#1{\widehat{\widetilde{#1}}}

\DeclareMathAccent{\wtilde}{\mathord}{largesymbols}{"65}
\DeclareMathAccent{\what}{\mathord}{largesymbols}{"62}

\def\m@th{\mathsurround=0pt}
\mathchardef\bracell="0365
\def\upbrall{$\m@th\bracell$}
\def\undertilde#1{\mathop{\vtop{\ialign{##\crcr
    $\hfil\displaystyle{#1}\hfil$\crcr
     \noalign
     {\kern1.5pt\nointerlineskip}
     \upbrall\crcr\noalign{\kern1pt
   }}}}\limits}

\def\underhat#1{\mathop{\vtop{\ialign{##\crcr
    $\hfil\displaystyle{#1}\hfil$\crcr
     \noalign
     {\kern1.5pt\nointerlineskip}
     \upbrall\crcr\noalign{\kern1pt
   }}}}\limits}

\newcommand{\wh}{\widehat}
\newcommand{\wt}{\widetilde}

\newcommand{\tc}{\,^t \hskip -2pt {\boldsymbol{c}}}


\newcommand{\Ga}{\boldsymbol{\Gamma}}


\newcommand{\bA}{\boldsymbol{A}}
\newcommand{\bB}{\boldsymbol{B}}

\newcommand{\bF}{\boldsymbol{F}}
\newcommand{\bG}{\boldsymbol{G}}
\newcommand{\bH}{\boldsymbol{H}}
\newcommand{\bI}{\boldsymbol{I}}

\newcommand{\bK}{\boldsymbol{K}}

\newcommand{\bM}{\boldsymbol{M}}

\newcommand{\bT}{\boldsymbol{T}}

\newcommand{\bc}{\boldsymbol{c}}
\newcommand{\br}{{\boldsymbol{r}}}
\newcommand{\bu}{\boldsymbol{u}}


\title{Solutions to the ABS lattice equations via\\
generalized Cauchy matrix approach}

\author{Da-jun Zhang\footnote{Corresponding author. E-mail address: djzhang@staff.shu.edu.cn},
~~Song-lin Zhao
\vspace{4mm}\\
{\small\it Department of Mathematics, Shanghai University, Shanghai 200444, P.R. China}}

\date{\today}

\begin{document}

\maketitle

\begin{abstract}
The usual Cauchy matrix approach starts from a known plain wave factor vector $\br$ and known dressed Cauchy matrix $\bM$.
In this paper we start from a matrix equation set with undetermined $\br$ and $\bM$.
From the starting equation set we can build shift relations for some defined scalar functions and then derive lattice equations.
The starting  matrix equation set admits more choices for $\br$ and $\bM$
and in the paper we give explicit formulae for all possible $\br$ and $\bM$.
As applications, we get more solutions than usual multi-soliton solutions
for many lattice equations
including  the lattice potential KdV  equation,
the lattice potential modified KdV  equation,
the lattice Schwarzian  KdV  equation, NQC equation
and some lattice equations in ABS list.

\vskip 8pt \noindent
{\bf Keywords:}\quad Cauchy matrix approach, solutions, discrete integrable systems

\noindent
{\bf PACS:}\quad   02.30.Ik, 02.30.Ks, 05.45.Yv

\noindent
{\bf MSC:}\quad   35C08, 35Q51, 37K60, 39A14

\end{abstract}


\section{Introduction}

In recent decades the research of integrability of difference equations
has got remarkable progress.
The property of multi-dimensional consistency\cite{Nijhoff-MDC,BS-2002,ABS-2003}
reveals integrability in some sense for discrete equations.
Based on this property lattice equation defined on an elementary square
can be classified\cite{ABS-2003} and the result is referred to as ABS list,
which is surprisingly short and only consists of nine equations named as
Q4, Q3, Q2, Q1, A2, A1, H3, H2 and H1.
We list out these equations in Appendix \ref{A:2}.

ABS list has received a lot of attention.
With regard to solutions,  a lattice equation that is multi-dimensionally consistent
provides automatically a B\"{a}cklund transformation\cite{ABS-2003}.
Then, by the B\"{a}cklund transformations one can derive both seed solutions and 1-soliton solutions\cite{Soliton-Q4,Hietarinta-ABS}.
For $N$-soliton solutions, the relation between Q3 equation and NQC equation\cite{Soliton-Q3,Nijhoff-ABS}
can be used.
NQC equation is derived in Cauchy matrix approach\cite{Nijhoff-ABS,NQC-1983}
and its solution can be expressed through Cauchy matrices.
On the other hand, Q3 equation can be degenerated to the lower equations Q2, Q1, H3, H2 and H1.
It then follows that $N$-soliton solutions of these equations can be given in terms of Cauchy matrices\cite{Nijhoff-ABS}.
Besides Cauchy matrix approach, several other approaches are developed for finding multi-soliton solutions of the lattice equations in ABS list,
such as bilinear approach\cite{Hietarinta-ABS}, transformation and iteration method\cite{Atkinson-2009},
Cauchy matrix approach with elliptic functions\cite{Nijhoff-ABS-E},
Inverse Scattering Transform\cite{H1-IST,Q3-IST}, and algebro-geometry approach\cite{Cao-1,Cao-2}.

The Cauchy matrix approach\cite{Nijhoff-ABS}(also see \cite{NQC-1983,Quispel-1984,math5492}),
starts from a known plain wave factor vector $\br$ and known dressed Cauchy matrix $\bM$, defines some scalar dependent variables,
constructs their recurrence relations,
and then from closed relations provides discrete lattice equations,
such as the lattice potential KdV (lpKdV) equation,
the lattice potential modified KdV (lpmKdV) equation,
the lattice Schwarzian  KdV (lSKdV) equation and NQC equation.
Solutions of these obtained lattice equations can be given through the dressed Cauchy matrix.

In this paper, instead of known plain wave factor vector $\br$ and known dressed Cauchy matrix $\bM$,
we start from a matrix equation set consisting of three equations,
among which two equations are used to determine plain wave factor vector $\br$ and the third equation is
used to define $\bM$.
We can build shift relations for some defined scalar functions and then derive lattice equations.
This procedure we call generalized Cauchy matrix approach.
In fact, the starting  matrix equation set is demonstrated to admit more choices for $\br$ and $\bM$ which leads to more solutions
than usual solitons.
In the paper we will give explicit forms of all possible solutions to the starting  matrix equation set.
As applications, these solutions are used to construct solutions for many
lattice equations,
such as the lpKdV, lpmKdV, lSKdV, NQC, Q3, Q2, Q1, H3, H2 and H1 equation.

The paper is organized as follows. In Sec. 2, we briefly review the Cauchy matrix approach.
In Sec.3 we describe the generalized Cauchy matrix approach.
In Sec.4 we solve the starting  matrix equation set and in Sec.5
as applications solutions for some lattice equations are given.

\section{Cauchy matrix approach}
\label{sec:2}

As a preliminary part let us briefly review the Cauchy matrix approach.
For more details one can refer to Ref.\cite{Nijhoff-ABS} or  \cite{math5492}.

A Cauchy matrix is known as a square matrix
$\bG=(G_{i,j})_{N\times N},~G_{i,j}=\frac{1}{w_j-z_i}$, where $z_i\neq w_j\in \mathbb{C}$.
We will use its symmetric form, which is
\begin{equation}
\bG=(G_{i,j})_{N\times N},~~G_{i,j}=\frac{1}{k_i+k_j},~~k_i\neq -k_j\in \mathbb{C}.
\end{equation}
The Cauchy matrix approach starts from a ``dressed'' Cauchy matrix
\begin{equation}
\bM=(M_{i,j})_{N\times N}=\bF\times \bG\times \bH,~~M_{i,j}=\frac{\rho_i c_j}{k_i+k_j},
\end{equation}
where $c_i\in \mathbb{C}$, $\rho_i$ is called the plain wave factor defined as
\begin{equation}
\rho_i=\biggl(\frac{p +k_i}{p -k_i}\biggr)^n\biggl(\frac{q + k_i}{q-k_i}\biggr)^m \rho^0_{i}
\end{equation}
with constants $p,q,\rho^0_{i}$,
and the dressing matrices are
\[
\bF=\mathrm{Diag}(\rho_1,\cdots,\rho_N), ~~~
\bH=\mathrm{Diag}(c_1,\cdots,c_N).
\]
$\bM$ satisfies the relation
\begin{equation}
\bM \bK+ \bK \bM = {\br}\, \tc,
\label{rel-MK}
\end{equation}
where
\begin{equation}
\bK=\mathrm{Diag}(k_1,k_2,\cdots,k_N),~~
\br=(\rho_1,\rho_2,\cdots,\rho_N)^T, ~~
\tc=(c_1,c_2,\cdots,c_N).
\end{equation}
By $\t{}$ and $\h{}$ we respectively denote the shifts in $n$ and $m$ direction, i.e.,
$\t f(n,m)=f(n+1,m)$, $\h f(n,m)=f(n,m+1)$.
Then, from the basic shift relation
\[\t \rho_i=\frac{p+k_i}{p-k_i}\rho_i,~~\h \rho_i=\frac{q+k_i}{q-k_i}\rho_i,\]
one may build the following shift relations of $\bM$,
\begin{subequations}
\begin{eqnarray}
&& \wt{\bM}(p\bI+\bK)-(p\bI+\bK)\bM=\wt{\br}~\tc, \label{eq:M-dyna-1}\\
&& (p\bI-\bK)\wt{\bM}-\bM(p\bI-\bK)=\br~\tc, \label{eq:M-dyna-2}\\
&& \wh{\bM}(q\bI+\bK)-(q\bI+\bK)\bM=\wh{\br}~\tc,\label{eq:M-dyna-3}\\
&& (q\bI-\bK)\wh{\bM}-\bM(q\bI-\bK)=\br~\tc,\label{eq:M-dyna-4}
\end{eqnarray}
\label{M-dyna}
\end{subequations}
where $\bI$ is the $N\times N$ unit matrix.

Next, for any $(i,j)\in \mathbb{Z}\times \mathbb{Z}$, introduce a scalar function
\begin{equation}
S^{(i,j)}= \tc \,\bK^j(\bI+ \bM)^{-1} \bK^i \br,
\label{Sij-1}
\end{equation}
which can be proved to have the symmetric property
\begin{equation}
S^{(i,j)}=S^{(j,i)}.
\label{Sij-1-sym}
\end{equation}
It then follows from the dynamical relation \eqref{M-dyna} that one can reach to a set of
recurrence relations:
\begin{subequations}
\begin{eqnarray}
&& p \wt{S}^{(i,j)}-\wt{S}^{(i,j+1)}=p S^{(i,j)}+S^{(i+1,j)}-\wt{S}^{(i,0)}S^{(0,j)},\label{eq:Sij-dyna-1}\\
&& p S^{(i,j)}+S^{(i,j+1)}=p \wt{S}^{(i,j)}-\wt{S}^{(i+1,j)}+S^{(i,0)}\wt{S}^{(0,j)},\label{eq:Sij-dyna-2}\\
&& q \wh{S}^{(i,j)}-\wh{S}^{(i,j+1)}=q S^{(i,j)}+S^{(i+1,j)}-\wh{S}^{(i,0)}S^{(0,j)},\label{eq:Sij-dyna-3}\\
&& q S^{(i,j)}+S^{(i,j+1)}=q
\wh{S}^{(i,j)}-\wh{S}^{(i+1,j)}+S^{(i,0)}\wh{S}^{(0,j)}.\label{eq:Sij-dyna-4}
\end{eqnarray}
\label{Sij-1-dyna}
\end{subequations}

Besides $S^{(i,j)}$, there is another scalar function defined by
\begin{equation}
S(a,b)=\tc(b\bI+\bK)^{-1}(\bI+\bM)^{-1}(a\bI+\bK)^{-1}\br,~~ a,b\in \mathbb{C},
\label{Sab-1}
\end{equation}
which also has symmetric property
\begin{equation}
S(a,b)=S(b,a).
\label{Sab-sym}
\end{equation}
It can be proved that $S(a,b)$ obeys the shift relations
\begin{subequations}
\label{Sab-1-re}
\begin{eqnarray}
&& 1-(p+b)\wt{S}(a,b)+(p-a)S(a,b)=\wt{V}(a)V(b), \label{Sab-1-re-a}\\
&& 1-(q+b)\wh{S}(a,b)+(q-a)S(a,b)=\wh{V}(a)V(b), \label{Sab-1-re-b}
\end{eqnarray}
\end{subequations}
where
\begin{eqnarray}
V(a)=1-\tc(a \bI+ \bK)^{-1}(\bI+\bM)^{-1}\br=1-\tc (\bI+\bM)^{-1}(a\bI+ \bK)^{-1}\br.
\label{Va}
\end{eqnarray}

With the help of the crucial symmetric properties \eqref{Sij-1-sym} and \eqref{Sab-sym},
some lattice equations appear as closed forms of the recurrence relations \eqref{Sij-1-dyna} and \eqref{Sab-1-re}.
We list those equations in Sec.\ref{sec:3.2}.

Since $S^{(i,j)}$ and $S(a,b)$ are  defined by the known elements $\bM, \bK, \br, \tc$, solutions of
those lattice equations are therefore given apparently.

\section{Generalized Cauchy matrix approach}
\label{sec:3}

In the previous section we briefly introduced the Cauchy matrix approach.
Following this approach several lattice equations can be derived and their solutions are expressed through
the known elements including
the dressed Cauchy matrix $\bM$, plain wave factor vector $\br$, constant diagonal matrix $\bK$ and constant vector $\tc$.

In the following, we start with unknown $\bM$, $\br$ and $\bK$,
and investigate a generalized Cauchy matrix approach.

\subsection{Recurrence relations}

Let us first give the following Lemma.

\begin{lem}\label{L:1-1}
Suppose that matrices  $\bK, \bA\in \mathbb{C}_{N\times N}$ are anticommutative, i.e.,
\begin{equation}
\bK \bA+\bA \bK=0,
\label{anti-c}
\end{equation}
where all the eigenvalues $\{k_1,k_2,\cdots, k_N\}$ (some of them can be same) of $\bK$
satisfy
\begin{equation}
k_i+k_j\neq 0,~~ \forall~ 1\leq i,j\leq N.
\label{kij}
\end{equation}
Then $\bA$ is a zero matrix.
\end{lem}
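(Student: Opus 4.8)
The plan is to exploit the way the anticommutation relation \eqref{anti-c} forces $\bA$ to intertwine the spectral subspaces of $\bK$ with those of $-\bK$, and then to observe that hypothesis \eqref{kij} makes these subspaces disjoint in a strong sense. Concretely, I would use the primary decomposition $\mathbb{C}^N=\bigoplus_{\lambda}W_\lambda$ of $\mathbb{C}^N$ into the generalized eigenspaces $W_\lambda=\ker(\bK-\lambda\bI)^N$ of $\bK$, the sum ranging over the distinct eigenvalues $\lambda\in\{k_1,\dots,k_N\}$, and show that $\bA$ maps each $W_\lambda$ into $W_{-\lambda}$.

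As a warm-up, note that if $\bK$ were diagonalizable the argument would be immediate: from $\bK v=\lambda v$ and \eqref{anti-c} we get $\bK(\bA v)=-\bA\bK v=-\lambda(\bA v)$, so $\bA v$ lies in the $(-\lambda)$-eigenspace of $\bK$; but by \eqref{kij} (already with $i=j$, which gives $2k_i\ne 0$) the value $-\lambda$ is not among the $k_j$, so that eigenspace is $\{0\}$, forcing $\bA v=0$ on a spanning set of eigenvectors, hence $\bA=0$. For general $\bK$—and the statement explicitly allows repeated $k_i$, so nontrivial Jordan blocks may occur—I would replace eigenvectors by generalized eigenvectors. Iterating \eqref{anti-c} in the form $(\bK+\lambda\bI)\bA=-\bA(\bK-\lambda\bI)$ gives the polynomial identity
\[
(\bK+\lambda\bI)^{N}\bA=(-1)^{N}\bA(\bK-\lambda\bI)^{N},
\]
so for $v\in W_\lambda$ one finds $(\bK+\lambda\bI)^{N}\bA v=(-1)^N\bA(\bK-\lambda\bI)^N v=0$, i.e. $\bA v\in\ker(\bK+\lambda\bI)^N=W_{-\lambda}$.

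It then remains to note that \eqref{kij} guarantees $-\lambda$ is never an eigenvalue of $\bK$: any eigenvalue of $\bK$ is some $k_j$, and $-\lambda=k_j$ together with $\lambda=k_i$ would give $k_i+k_j=0$, contradicting \eqref{kij}. Hence $W_{-\lambda}=\{0\}$ for every $\lambda$, so $\bA$ annihilates each $W_\lambda$; since these subspaces span $\mathbb{C}^N$, we conclude $\bA=0$. (Equivalently, one may phrase the whole argument operator-theoretically: the Sylvester-type operator $X\mapsto\bK X+X\bK$ on $\mathbb{C}_{N\times N}$ has spectrum $\{k_i+k_j\}$, all nonzero by \eqref{kij}, hence is invertible, so its kernel—which contains $\bA$—is trivial.) The only genuine subtlety, and the step I would treat most carefully, is the non-diagonalizable case: the one-vector eigenvector computation does not suffice there, and one must pass to the primary decomposition and invoke the polynomial identity displayed above rather than a single eigenrelation.
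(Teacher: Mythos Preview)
Your proof is correct. The argument via the primary decomposition, together with the intertwining identity $(\bK+\lambda\bI)^{N}\bA=(-1)^{N}\bA(\bK-\lambda\bI)^{N}$, cleanly handles the non-diagonalizable case, and the Sylvester-operator remark is a valid alternative.

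The paper takes a different, more hands-on route. It conjugates $\bK$ to upper-triangular form $\Ga=T\bK T^{-1}$ (Schur form, with diagonal entries $k_1,\dots,k_N$), sets $\bB=T\bA T^{-1}=(\beta_1,\dots,\beta_N)$, and reads off the columns of $\Ga\bB+\bB\Ga=0$ one at a time: the first column gives $(k_1\bI+\Ga)\beta_1=0$, hence $\beta_1=0$ since $\det(k_1\bI+\Ga)=\prod_j(k_1+k_j)\neq 0$; with $\beta_1=0$ the second column collapses to $(k_2\bI+\Ga)\beta_2=0$, and so on inductively. Your approach is basis-free and makes transparent the underlying spectral reason (and, via the Sylvester viewpoint, generalizes immediately to equations of the form $\bK X+X\bL=0$), whereas the paper's column-by-column elimination is more elementary in that it avoids invoking the primary decomposition or any operator-on-matrices machinery, needing only triangularization and a determinant computation.
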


We leave the proof in Appendix \ref{A:1}.
Next let us start to derive some recurrence relations.

\begin{Theorem}
\label{T:1}
Consider the $N\times N$ matrices $\bM=(M_{i,j}(n,m))_{N\times N}$ and $\bK=(K_{i,j})_{N\times N}$,
and the $N$-th order vectors $\br=(\rho_1(n,m),\rho_2(n,m),\cdots,\rho_N(n,m))^T$ and $\tc=(c_1,c_2,\cdots,c_N)$,
where $M_{i,j}(n,m)$ and $\rho_j(n,m)$ are undetermined functions while $K_{i,j}$ and $c_j$ are constants.
Suppose that $\bM$, $\br$,  $\bK$ and  $\tc$ obey the relations
\begin{subequations}
\label{cond}
\begin{align}
& (p \bI- \bK)\wt{\br}  =(p \bI+\bK) \br,\label{cond-r-p}\\
& (q \bI-\bK)\wh{\br}  =(q \bI+\bK) \br,\label{cond-r-q}\\
& \bM \bK+ \bK \bM = \br \tc, \label{cond-MK}
\end{align}
\end{subequations}
which we call the starting matrix equation set,
where $p,q$ are constants, $\bI$ is the $N\times N$ unit matrix.
Define scalar function
\begin{equation}
S^{(i,j)}= \tc \,\bK^j(\bI+ \bM)^{-1} \bK^i \br,~~i,j\in \mathbb{Z},
\label{Sij-2}
\end{equation}
then we have the following recurrence relations:
\begin{subequations}\label{Sij-shift}
\begin{eqnarray}
&& p \wt{S}^{(i,j)}-\wt{S}^{(i,j+1)}=p S^{(i,j)}+S^{(i+1,j)}-\wt{S}^{(i,0)}S^{(0,j)},\label{Sij-shift-a}\\
&& p S^{(i,j)}+S^{(i,j+1)}=p \wt{S}^{(i,j)}-\wt{S}^{(i+1,j)}+S^{(i,0)}\wt{S}^{(0,j)},\label{Sij-shift-b}\\
&& q \wh{S}^{(i,j)}-\wh{S}^{(i,j+1)}=q S^{(i,j)}+S^{(i+1,j)}-\wh{S}^{(i,0)}S^{(0,j)},\label{Sij-shift-c}\\
&& q S^{(i,j)}+S^{(i,j+1)}=q
\wh{S}^{(i,j)}-\wh{S}^{(i+1,j)}+S^{(i,0)}\wh{S}^{(0,j)}.\label{Sij-shift-4}
\end{eqnarray}
\end{subequations}
Define another scalar function
\begin{equation}
S(a,b)=\tc(b\bI+\bK)^{-1}(\bI+\bM)^{-1}(a\bI+\bK)^{-1}\br,~~ a,b\in \mathbb{C}.
\label{Sab-2}
\end{equation}
If $S^{(i,j)}$ satisfies the symmetric property
\begin{equation}
S^{(i,j)}=S^{(j,i)},
\label{Sij-sym}
\end{equation}
then $S(a,b)$ is also of symmetric form
\begin{equation}
S(a,b)=S(b,a)
\end{equation}
and satisfies  the shift relations:
\begin{subequations}
\label{Sab-2-re}
\begin{eqnarray}
&& 1-(p+b)\wt{S}(a,b)+(p-a)S(a,b)=\wt{V}(a)V(b), \label{Sab-2-re-a}\\
&& 1-(q+b)\wh{S}(a,b)+(q-a)S(a,b)=\wh{V}(a)V(b), \label{Sab-2-re-b}\\
&& 1-(p+a)\wt{S}(a,b)+(p-b)S(a,b)=\wt{V}(b)V(a), \label{Sab-2-re-c}\\
&& 1-(q+a)\wh{S}(a,b)+(q-b)S(a,b)=\wh{V}(b)V(a), \label{Sab-2-re-d}
\end{eqnarray}
\end{subequations}
where
\begin{equation}
V(a)=1-\tc (\bI+\bM)^{-1}(a\bI+ \bK)^{-1}\br.
\label{Va-def}
\end{equation}
Here in the following we suppose $s\bI\pm \bK$ is invertible for $s=0,p,q,a,b$,
and all the eigenvalues $\{k_1,k_2,\cdots, k_N\}$ (some of them can be same) of $\bK$
satisfy
\begin{equation}
k_i+k_j\neq 0,~~ \forall~ 1\leq i,j\leq N.
\label{kij-2}
\end{equation}
\end{Theorem}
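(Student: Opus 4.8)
\medskip
\noindent\emph{Plan of proof.} The essential difficulty is that, unlike in Section~\ref{sec:2}, $\bM$ is no longer an explicit dressed Cauchy matrix: it is specified only implicitly by \eqref{cond}, so every matrix identity used in the classical derivation has to be recovered from \eqref{cond} alone. This is Step~1 below, and it is the main obstacle; the rest is essentially the classical bookkeeping. \emph{Step 1 (lift \eqref{cond} to shift relations for $\bM$).} I would first show that the four relations \eqref{eq:M-dyna-1}--\eqref{eq:M-dyna-4} still hold, now as consequences of \eqref{cond}. To obtain \eqref{eq:M-dyna-1}, set $\bB:=\wt{\bM}(p\bI+\bK)-(p\bI+\bK)\bM-\wt{\br}\,\tc$ and compute $\bB\bK+\bK\bB$: since $\bK$ commutes with $p\bI\pm\bK$, the $\bM$-terms collapse, via \eqref{cond-MK} used for both $\bM$ and its $\wt{}$-shift, into $\wt{\br}\,\tc(p\bI+\bK)-(p\bI+\bK)\br\,\tc$; the $\wt{\br}\,\tc\bK$ contributions then cancel and, regrouping, one is left with $\bigl((p\bI-\bK)\wt{\br}-(p\bI+\bK)\br\bigr)\tc$, which is $0$ by \eqref{cond-r-p}. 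Hence $\bB\bK+\bK\bB=0$, and Lemma~\ref{L:1-1} (applicable by \eqref{kij-2}) gives $\bB=0$, i.e.\ \eqref{eq:M-dyna-1}. The same computation with $(p\bI-\bK)\wt{\bM}-\bM(p\bI-\bK)-\br\,\tc$ in place of $\bB$ yields \eqref{eq:M-dyna-2}, and the $\wh{}$-direction relations \eqref{eq:M-dyna-3}--\eqref{eq:M-dyna-4} come out identically from \eqref{cond-r-q}. (Applying Lemma~\ref{L:1-1} directly to \eqref{cond-MK} also shows $\bM$ is uniquely determined by $\bK,\br,\tc$, so writing it out explicitly is unnecessary.)

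\emph{Step 2 (the recurrences \eqref{Sij-shift}).} Rewrite \eqref{eq:M-dyna-1} as $(\bI+\wt{\bM})(p\bI+\bK)-(p\bI+\bK)(\bI+\bM)=\wt{\br}\,\tc$ and sandwich it between $(\bI+\wt{\bM})^{-1}$ on the left and $(\bI+\bM)^{-1}$ on the right to get $(p\bI+\bK)(\bI+\bM)^{-1}-(\bI+\wt{\bM})^{-1}(p\bI+\bK)=(\bI+\wt{\bM})^{-1}\wt{\br}\,\tc(\bI+\bM)^{-1}$; likewise \eqref{eq:M-dyna-2} gives $(\bI+\bM)^{-1}(p\bI-\bK)-(p\bI-\bK)(\bI+\wt{\bM})^{-1}=(\bI+\bM)^{-1}\br\,\tc(\bI+\wt{\bM})^{-1}$. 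Now multiply the first relation on the right by $\bK^i\br$ and the second by $\bK^i\wt{\br}$, use $(p\bI+\bK)\bK^i\br=(p\bI-\bK)\bK^i\wt{\br}$ (which is $\bK^i$ times \eqref{cond-r-p}), then multiply both on the left by $\tc\bK^j$. On the right this produces $S^{(i,0)}\wt{S}^{(0,j)}$ and $\wt{S}^{(i,0)}S^{(0,j)}$ respectively, and the remaining terms on the left are precisely the left-hand sides of \eqref{Sij-shift-b} and \eqref{Sij-shift-a}. Relations \eqref{Sij-shift-c}--\eqref{Sij-shift-4} follow identically with \eqref{cond-r-q}, \eqref{eq:M-dyna-3}--\eqref{eq:M-dyna-4} in place of the $p$-data; note that this step does not use the symmetry \eqref{Sij-sym}.

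\emph{Step 3 ($S(a,b)$).} For $|a|,|b|$ large, $(a\bI+\bK)^{-1}=\sum_{i\ge 0}(-1)^{i}\bK^{i}a^{-i-1}$ converges, so $S(a,b)=\sum_{i,j\ge 0}(-1)^{i+j}S^{(i,j)}a^{-i-1}b^{-j-1}$; since $S(a,b)$ is a rational function of $a$ and $b$, the hypothesis \eqref{Sij-sym} forces $S(a,b)=S(b,a)$ identically, and the same expansion yields $\tc(b\bI+\bK)^{-1}(\bI+\bM)^{-1}\br=\tc(\bI+\bM)^{-1}(b\bI+\bK)^{-1}\br=1-V(b)$. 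Next put $\bu(a):=(\bI+\bM)^{-1}(a\bI+\bK)^{-1}\br$, so $S(a,b)=\tc(b\bI+\bK)^{-1}\bu(a)$ and $V(a)=1-\tc\,\bu(a)$. Starting from $(a\bI+\bK)(\bI+\wt{\bM})\wt{\bu}(a)=\wt{\br}$, apply $p\bI-\bK$, pass it through $a\bI+\bK$, and use the rewrite $(p\bI-\bK)(\bI+\wt{\bM})=(\bI+\bM)(p\bI-\bK)+\br\,\tc$ of \eqref{eq:M-dyna-2} together with \eqref{cond-r-p}; multiplying by $(\bI+\bM)^{-1}(a\bI+\bK)^{-1}$ on the left, everything collapses to $(p\bI-\bK)\wt{\bu}(a)=(p-a)\bu(a)+\wt{V}(a)\,(\bI+\bM)^{-1}\br$. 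Multiplying this on the left by $\tc(b\bI+\bK)^{-1}$, simplifying the left side with $(b\bI+\bK)^{-1}(p\bI-\bK)=(p+b)(b\bI+\bK)^{-1}-\bI$ and the right side with $\tc(b\bI+\bK)^{-1}(\bI+\bM)^{-1}\br=1-V(b)$, the $\wt{V}(a)$ terms cancel and \eqref{Sab-2-re-a} drops out. Relation \eqref{Sab-2-re-b} is the $\wh{}$-analogue, and \eqref{Sab-2-re-c}, \eqref{Sab-2-re-d} follow from \eqref{Sab-2-re-a}, \eqref{Sab-2-re-b} by exchanging $a\leftrightarrow b$ and using $S(a,b)=S(b,a)$.

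In summary, the only genuinely new ingredients beyond the classical argument are the anticommutator computation together with Lemma~\ref{L:1-1} in Step~1, and the elementary series/rationality argument in Step~3 that converts the hypothesis \eqref{Sij-sym} into the symmetry of $S(a,b)$ and the two equivalent forms of $V(a)$; the standing invertibility of $s\bI\pm\bK$ for $s=0,p,q,a,b$ is used wherever inverses appear. I expect Step~1 to require the most care, since there $\bM$-level identities that were transparent in the classical setting must be manufactured from \eqref{cond}.
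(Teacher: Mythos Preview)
Your proposal is correct and follows essentially the same route as the paper. The only tactical differences are: in Step~1 the paper applies Lemma~\ref{L:1-1} once per direction to first obtain the intermediate identity $(p\bI-\bK)\wt{\bM}=(p\bI+\bK)\bM$ (eq.~\eqref{cond-M-t}) and then derives \eqref{eq:M-dyna-1}--\eqref{eq:M-dyna-2} algebraically from it together with \eqref{cond-MK}, whereas you apply Lemma~\ref{L:1-1} directly to each candidate $\bB$; and in Step~2 the paper introduces the auxiliary vectors $\bu^{(i)}=(\bI+\bM)^{-1}\bK^i\br$ and first derives their shift relations \eqref{ui-shift}, while you sandwich and multiply in one pass. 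Both variants are equivalent in substance; Step~3 matches the paper's argument almost verbatim.
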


\begin{proof}
Let us first derive a shift relation of $\bK$ and $\bM$.
From $\t{\eqref{cond-MK}}$ and \eqref{cond-r-p} and noting that $(p\bI\pm \bK)\bK=\bK(p\bI\pm \bK)$,
we have
\begin{equation}
(p\bI-\bK)\t{\bM}\bK+\bK(p\bI-\bK)\t\bM=(p\bI+\bK)\br \tc.
\label{t1}
\end{equation}
Meanwhile, just left-multiplying $p\bI+\bK$ on \eqref{cond-MK} yields
\begin{equation}
(p\bI+\bK){\bM}\bK+\bK(p\bI+\bK)\bM=(p\bI+\bK)\br \tc.
\label{t2}
\end{equation}
Subtracting \eqref{t2} from \eqref{t1} yields
\[[(p \bI-\bK) \t{\bM}-(p \bI+\bK)\bM]\bK+\bK[(p \bI-\bK) \t{\bM}-(p \bI+\bK)\bM]=0,\]
which further, in the light of Lemma \ref{L:1-1}, gives
\begin{subequations}\label{cond-M-th}
\begin{equation}
(p \bI-\bK) \wt{\bM}=(p \bI+\bK)\bM. \label{cond-M-t}
\end{equation}
For $q$ and hat-shift we have
\begin{equation}
(q \bI- \bK) \wh{\bM}=(q \bI+\bK)  \bM. \label{cond-M-h}
\end{equation}
\end{subequations}

In the following, the proof is actually similar to the usual Cauchy matrix approach\cite{Nijhoff-ABS}.
Taking $\wt{\phantom{a}}$ shift of \eqref{cond-MK} we have
\[
\wt{\br}\, \tc=\wt{\bM}\bK+\bK\wt{\bM},
\]
and replacing the last term $\bK\wt{\bM}$ by \eqref{cond-M-t} yields
\begin{subequations}\label{eq:M-dyna-g}
\begin{eqnarray}
\wt{\bM}(p\bI+\bK)-(p\bI+\bK)\bM=\wt{\br}\, \tc. \label{eq:M-dyna-1}
\end{eqnarray}
Besides, if deleting the term $\bK \bM$ from \eqref{cond-MK} and \eqref{cond-M-t} we have
\begin{equation}
(p\bI-\bK)\wt{\bM}-\bM(p\bI-\bK)=\br\, \tc. \label{eq:M-dyna-2}
\end{equation}
Similarly, using \eqref{cond-MK} and \eqref{cond-M-h} we have
\begin{eqnarray}
&& \wh{\bM}(q\bI+\bK)-(q\bI+\bK)\bM=\wh{\br}\,\tc,\label{eq:M-dyna-3}\\
&& (q\bI-\bK)\wh{\bM}-\bM(q\bI-\bK)=\br\,\tc.\label{eq:M-dyna-4}
\end{eqnarray}
\end{subequations}

Now, noting that the above shift relation \eqref{eq:M-dyna-g} is exactly the same as \eqref{M-dyna},
and the matrices $\bK, ~(a\bI+\bK)$ and $(b\bI+\bK)^{-1}$ commute each other as being diagonals,
one can get the recurrence relation \eqref{Sij-shift} as in \cite{Nijhoff-ABS}.
For completeness, we show the procedure in the following.
We introduce auxiliary vectors
\begin{equation}
 \bu^{(i)}=(\bI+\bM)^{-1} \bK^i \br,~~
\label{eq:tuij}
\end{equation}
and  $S^{(i,j)}$ defined in \eqref{Sij-2} is then rewritten as
\begin{equation}
S^{(i,j)}= \tc \bK^j(\bI+ \bM)^{-1} \bK^i \br
= \tc \bK^j\bu^{(i)}.\label{Sij-3}
\end{equation}

Using the shift relation \eqref{cond-r-p}, from \eqref{eq:tuij} we have
\[(\bI+\t \bM)\t\bu^{(i)}=\bK^i \t\br=(p\bI-\bK)^{-1}\bK^{i}(p\bI+\bK)\br,\]
i.e.,
\[\bK^{i}(p\bI+\bK)\br=(p\bI-\bK)(\bI+\t \bM)\t\bu^{(i)}.\]
Then, employing the exchange relation
\begin{equation}
(p\bI-\bK)(\bI+\wt{\bM})=(\bI+\bM)(p\bI-\bK)+\br\,\tc \label{eq:M-dyna-2-exc}
\end{equation}
that is indicated by \eqref{eq:M-dyna-2}, one has
\[\bK^{i}(p\bI+\bK)\br=(\bI+\bM)(p\bI-\bK)\t\bu^{(i)}+ \br\, \tc\,\wt{\bu}^{(i)},\]
which further, left multiplied by $(\bI+\bM)^{-1}$, yields the relation
\begin{subequations}
\label{ui-shift}
\begin{eqnarray}
 (p \bI-\bK)\wt{\bu}^{(i)}=p \bu^{(i)}+\bu^{(i+1)}-\wt{S}^{(i,0)}\bu^{(0)}. \label{ui-shift-1}
\end{eqnarray}
Now replacing $\br$ in \eqref{eq:tuij} by using \eqref{cond-r-p}, we have
\[(\bI+ \bM)\bu^{(i)}=\bK^i \br=(p\bI+\bK)^{-1}\bK^{i}(p\bI-\bK)\t\br,\]
i.e.,
\[\bK^{i}(p\bI-\bK)\t\br=(p\bI+\bK)(\bI+ \bM)\bu^{(i)}.\]
In this turn we make use of the exchange relation
\begin{equation*}
(p\bI+\bK)(\bI+{\bM})=(\bI+\t\bM)(p\bI+\bK)-\t\br~\tc
\end{equation*}
that is derived from \eqref{eq:M-dyna-1}, we will finally reach to
\begin{eqnarray}
 (p \bI+\bK){\bu}^{(i)}=p\t \bu^{(i)}-\t \bu^{(i+1)}+{S}^{(i,0)}\t\bu^{(0)}. \label{ui-shift-2}
\end{eqnarray}
Symmetrically, we can derive shift relations for $(q, \h{~})$,
\begin{eqnarray}
&& (q \bI-\bK)\wh{\bu}^{(i)}=q\bu^{(i)}+\bu^{(i+1)}-\wh{S}^{(i,0)}\bu^{(0)},\label{ui-shift-3}\\
&& (q \bI+\bK)\bu^{(i)}=q\wh{\bu}^{(i)} -\wh{\bu}^{(i+1)}+S^{(i,0)}\wh{\bu}^{(0)}.\label{ui-shift-4}
\end{eqnarray}
\end{subequations}
Now, with the shift relation \eqref{ui-shift} in hand, a left-multiplication of ${}^t\bc\bK^j$
immediately yields the recurrence relation \eqref{Sij-shift} for $S^{(i,j)}$.

The symmetric property $S^{(i,j)}=S^{(j,i)}$ plays an important role and we will discuss this property later.
If this property holds, noting that the formal expansion
\begin{equation}
S(a,b)=\sum^{\infty}_{j=0}\sum^{\infty}_{i=0}\frac{(-1)^{i+j}}{a^{i+1}b^{j+1}}S^{(i,j)},
\end{equation}
it is easy to see that $S(a,b)=S(b,a)$,
and similarly, for $V(a)$ defined in \eqref{Va-def}, i.e.,
\begin{equation}
V(a)=1-\tc (\bI+\bM)^{-1}(a\bI+ \bK)^{-1}\br =1-\tc(a \bI+ \bK)^{-1}(\bI+\bM)^{-1}\br,
\label{Va-def-sym}
\end{equation}
the second equality of holds as well.

In order to get the relation \eqref{Sab-2-re}, we introduce an auxiliary vector
\begin{equation}
\bu(a)= (\bI+\bM)^{-1}(a\bI+\bK)^{-1}\br,
\end{equation}
under which we have
\begin{subequations}\label{Sab-Va}
\begin{align}
S(a,b)&=\tc(b\bI+\bK)^{-1}\bu(a),\\
V(a)&=1-\tc\, \bu(a). \label{Va-def-2}
\end{align}
\end{subequations}
Then, using \eqref{cond-r-p} we have
\[
\t\bu(a)=(\bI+\t\bM)^{-1}(a\bI+\bK)^{-1}\t\br=(\bI+\wt{\bM})^{-1}(a\bI+\bK)^{-1}(p\bI-\bK)^{-1}(p\bI+\bK)\br,\]
i.e.,
\[
(p\bI-\bK)(\bI+\wt{\bM})\t\bu(a)=\br+(p-a)(a\bI+\bK)^{-1}\br.
\]
Making use of \eqref{eq:M-dyna-2-exc} and \eqref{Va-def-2} yields
\[
(\bI+\bM)(p\bI-\bK)\t\bu(a)=(p-a)(a\bI+\bK)^{-1}\br+\t V(a)\br.
\]
Then by left-multiplying $(\bI+\bM)^{-1}$ we get
\begin{equation}
(p\bI-\bK)\t\bu(a)=(p-a) \bu(a)+\t V(a)\bu^{(0)}.
\label{ua-t}
\end{equation}

Next, left-multiplying $\tc(b\bI+\bK)^{-1}$ on \eqref{ua-t} we can reach to
\begin{equation}
 1-(p+b)\t{S}(a,b)+(p-a)S(a,b)=\wt{V}(a)V(b),
\end{equation}
i.e., the shift relation \eqref{Sab-2-re-a},
where we have used the expression \eqref{Sab-Va} and the symmetric form
$V(b)=1-\tc (b\bI+\bK)^{-1}(\bI+\bM)^{-1}\br$
which has been given in \eqref{Va-def-sym}.
In a similar way we get relation \eqref{Sab-2-re-b}.
\eqref{Sab-2-re-c} and \eqref{Sab-2-re-d} are derived from \eqref{Sab-2-re-a} and \eqref{Sab-2-re-b}
thanks to the arbitrariness of $a,b$ and the symmetric property $S(a,b)=S(b,a)$.

Now the proof for Theorem \ref{T:1} is completed.
\end{proof}

\subsection{List of equations}
\label{sec:3.2}

With the assumption of the symmetric property \eqref{Sij-sym}, some lattice equations can be derived  as closed forms of
the recurrence relations given in Theorem \ref{T:1} (see \cite{Nijhoff-ABS} for detailed derivation).
In the following we list them out together with their solution formulae.
\begin{itemize}
\item{Lattice potential KdV (lpKdV) equation:
\begin{subequations}
\label{lpKdV}
\begin{eqnarray}
&& (p+q+w-\th{w})(p-q+\h{w}-\t{w})=p^2-q^2,\label{lpKdV-a}\\
&& w=S^{(0,0)}= \tc (\bI+ \bM)^{-1}\br;\label{lpKdV-b}
\end{eqnarray}
\end{subequations}
}
\item{Lattice potential modified KdV (lpmKdV) equation:
\begin{subequations}
\label{lpmKdV}
\begin{eqnarray}
&& p(v\wh{v}-\wt{v}\wh{\wt{v}})=q(v\wt{v}-\wh{v}\wh{\wt{v}}),\label{lpmKdV-a}\\
&& v=1-S^{(0,-1)}= 1-\tc \bK^{-1} (\bI+\bM)^{-1}\br;
\label{lpmKdV-b}
\end{eqnarray}
\end{subequations}
}
\item{Lattice Schwarzian  KdV (lSKdV) equation:
\begin{subequations}
\label{lSKdV}
\begin{eqnarray}
&& \frac{(z-\wt{z})(\wh{z}-\wh{\wt{z}})}{(z-\wh{z})(\wt{z}-\wh{\wt{z}})}=\frac{q^2}{p^2},\label{lSKdV-a}\\
&& z=\tc \bK^{-1}(\bI+ \bM)^{-1}\bK^{-1}\br-z_0-\frac{n}{p}-\frac{m}{q},~~ z_0\in \mathbb{C};
\label{lSKdV-b}
\end{eqnarray}
\end{subequations}
}
\item{NQC equation\cite{NQC-1983}:
\begin{subequations}\label{NQC}
\begin{eqnarray}
&& \frac{1-(p+b)\wh{\wt{S}}(a,b)+(p-a)\wh{S}(a,b)}
{1-(q+b)\wh{\wt{S}}(a,b)+(q-a)\wt{S}(a,b)}
=\frac{1-(q+a)\wh{S}(a,b)+(q-b)S(a,b)}
{1-(p+a)\wt{S}(a,b)+(p-b)S(a,b)},~~~~
\label{NQC-eq}\\
&& S(a,b)=\tc(b\bI+\bK)^{-1}(\bI+\bM)^{-1}(a\bI+\bK)^{-1}\br,~~ a,b\in \mathbb{C}.
\label{Sab-3}
\end{eqnarray}
\end{subequations}
}
\end{itemize}

In the following our task is to solve the starting equation set \eqref{cond} so that we can give explicit solutions of the above lattice equations.

\section{Solutions to the starting matrix equation set \eqref{cond}}

\subsection{Simplification and canonical forms}

In the starting matrix equation set \eqref{cond} $\tc=(c_1,c_2,\cdots, c_N)$ is a known constant vector
and we need to look for the solution triad $\{\bK, \br, \bM\}$.
We suppose $\Ga$ is the canonical form of $\bK$, i.e.,
\begin{equation}
\bK=\bT^{-1} \Ga \,\bT,
\end{equation}
where $\bT$ is the transform matrix. We define
\begin{eqnarray}
\bM_1=\bT \bM \bT^{-1},~~\br_1=\bT \br,~~\tc_1=\tc\, \bT^{-1}.\label{eq:-Mrtc-1}
\end{eqnarray}
It then follows from \eqref{cond} that
\begin{subequations}
\begin{align}
&(p \bI-\Ga)\wt{\br}_1=(p\bI+\Ga) \br_1,\label{cond-1a}\\
&(q \bI-\Ga)\wh{\br}_1=(q\bI+\Ga) \br_1,\label{cond-1b}\\
& \bM_1 \Ga + \Ga\bM_1= \br_1 \,\tc_1. \label{cond-1c}
\end{align}
\label{cond-1}
\end{subequations}
Besides, for the scalar functions $ S^{(i,j)}$, $S(a,b)$ and $V(a)$ we find
\begin{align*}
S^{(i,j)}& = \tc \bK^j(\bI+ \bM)^{-1} \bK^i \br= \tc_1
\Ga^j(\bI+ \bM_1)^{-1}\Ga^i \br_1,\\
S(a,b)& = \tc (b \bI+ \bK)^{-1}(\bI+ \bM)^{-1}(a \bI+ \bK)^{-1}\br= \tc_1
 (b\bI+\Ga)^{-1} (\bI+ \bM_1)^{-1} (a \bI+ \Ga)^{-1} \br_1,\\
V(a)& =1-\tc(\bI+\bM)^{-1}(a \bI+ \bK)^{-1}\br = 1-\tc_1(\bI+\bM_1)^{-1}(a \bI+ \Ga)^{-1}\br_1,
\end{align*}
i.e., $\bK$ and its canonical form $\Ga$ lead to same $ S^{(i,j)}$, same $S(a,b)$ and same $V(a)$.
One more invariant is
\begin{equation}
Z(a,b) = \tc (b \bI+ \bK)^{-2}(\bI+ \bM)^{-1}(a \bI+ \bK)^{-1}\br= \tc_1
 (b\bI+\Ga)^{-2} (\bI+ \bM_1)^{-1} (a \bI+ \Ga)^{-1} \br_1,
\end{equation}
which will be used in later discussion.
Thanks to such invariance, in the following we neglect the subscripts in \eqref{cond-1} and consider the simplified/canonical equation set
\begin{subequations}
\begin{align}
&(p \bI-\Ga)\wt{\br}=(p\bI+\Ga) \br,\label{cond-2a}\\
&(q \bI-\Ga)\wh{\br}=(q\bI+\Ga) \br,\label{cond-2b}\\
& \bM \Ga + \Ga\bM= \br\, \tc, \label{cond-2c}
\end{align}
\label{cond-2}
\end{subequations}
where $\Ga$ is a $N\times N$ matrix in canonical form.
Here we would like to specify the invertible assumption for $\Ga$.
\begin{Proposition}
\label{P:1}
Hereafter we always suppose $\Ga$ satisfies the following  necessary invertible condition:
$s\bI\pm \Ga$ is invertible for $s=0,p,q,a,b$,
and all the eigenvalues $\{k_1,k_2,\cdots, k_N\}$ (some of them can be same) of $\Ga$
satisfy
\begin{equation}
k_i+k_j\neq 0,~~ \forall~ 1\leq i,j\leq N.
\label{kij-3}
\end{equation}
\end{Proposition}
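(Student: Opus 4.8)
The plan is to read the Proposition as a bookkeeping statement and to check, one parameter value at a time, that each invertibility requirement is forced by objects that have already been written down in Theorem~\ref{T:1} and in Section~\ref{sec:3.2}, so that under the listed conditions every inverse matrix appearing in the construction is defined and the triad $\{\Ga,\br,\bM\}$ is genuinely pinned down by the starting set \eqref{cond-2}.

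First I would dispose of $s=p$ and $s=q$. Relation \eqref{cond-2a}, $(p\bI-\Ga)\wt{\br}=(p\bI+\Ga)\br$, defines $\wt{\br}=(p\bI-\Ga)^{-1}(p\bI+\Ga)\br$ only when $p\bI-\Ga$ is invertible, and reading it in the other direction, $\br=(p\bI+\Ga)^{-1}(p\bI-\Ga)\wt{\br}$, needs $p\bI+\Ga$ invertible; both $(p\bI-\bK)^{-1}$ and $(p\bI+\bK)^{-1}$ are in fact used explicitly in the proof of Theorem~\ref{T:1} (e.g. in $\bK^i\t{\br}=(p\bI-\bK)^{-1}\bK^i(p\bI+\bK)\br$ and in $(\bI+\bM)\bu^{(i)}=(p\bI+\bK)^{-1}\bK^i(p\bI-\bK)\t{\br}$, and in the exchange relation \eqref{eq:M-dyna-2-exc}). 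The case $s=q$ is identical with \eqref{cond-2b}.

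Next, $s=0$, i.e. $\Ga$ invertible ($k_i\neq0$), is forced because the dependent variables of the lpmKdV and lSKdV equations contain $\bK^{-1}=\Ga^{-1}$ explicitly: $v=1-\tc\bK^{-1}(\bI+\bM)^{-1}\br$ in \eqref{lpmKdV-b} and $z=\tc\bK^{-1}(\bI+\bM)^{-1}\bK^{-1}\br-\cdots$ in \eqref{lSKdV-b} (that is, $S^{(0,-1)}$ and $S^{(-1,-1)}$), so without $\Ga^{-1}$ they are undefined. For $s=a,b$: the functions $S(a,b)$ in \eqref{Sab-2} and $V(a)$ in \eqref{Va-def} contain $(a\bI+\bK)^{-1}$ and $(b\bI+\bK)^{-1}$, forcing $-a,-b\notin\mathrm{spec}(\Ga)$; since in the NQC equation \eqref{NQC} the parameters $a,b$ play the role of the lattice parameters $p,q$ — one may adjoin $a$- and $b$-shifts obeying analogues of \eqref{cond-2a} — the symmetric requirement that $a\bI\pm\Ga$ and $b\bI\pm\Ga$ be invertible is the natural one.

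Finally, the condition $k_i+k_j\neq0$ for all $i,j$ is precisely the hypothesis of Lemma~\ref{L:1-1}, invoked in the proof of Theorem~\ref{T:1} to pass from the vanishing anticommutator to the $\bM$-shift relation \eqref{cond-M-t}; moreover, viewing \eqref{cond-2c} as the Sylvester equation $\Ga\bM+\bM\Ga=\br\tc$ for the unknown $\bM$, it has a unique solution iff $\Ga$ and $-\Ga$ share no eigenvalue, i.e. iff $k_i+k_j\neq0$ for all $i,j$, so this is exactly the condition that makes the third member of the starting set determine $\bM$. I expect the only delicate point to be arguing that the list $s\in\{0,p,q,a,b\}$ is exhaustive, i.e. that no further inverse is silently used later in the paper; this is a matter of inspecting the formulas rather than a genuine obstacle, and granting Lemma~\ref{L:1-1} together with the standard solvability criterion for Sylvester equations, the Proposition follows.
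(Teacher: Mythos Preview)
Your analysis is careful and essentially correct as motivation, but you have misread the logical status of this Proposition. It is not a theorem to be proved: the paper uses the word ``Proposition'' here in the sense of a standing hypothesis or convention (note the opening word ``Hereafter we always suppose\ldots''). Accordingly the paper gives no proof at all --- after stating Proposition~\ref{P:1} it simply moves on to Theorem~\ref{T:2}. There is nothing to prove because the statement asserts no conclusion; it only records assumptions that will be in force from this point onward.

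What you have actually written is a justification that these assumptions are \emph{natural}, by tracing each condition back to a place in Theorem~\ref{T:1} or Section~\ref{sec:3.2} where the corresponding inverse is used, and by invoking the Sylvester criterion to explain why $k_i+k_j\neq 0$ makes \eqref{cond-2c} determine $\bM$ uniquely. That discussion is accurate and would make a good explanatory remark, but it is commentary rather than proof. If anything, the one claim in your write-up that goes beyond the paper --- that the Sylvester equation $\Ga\bM+\bM\Ga=\br\tc$ has a unique solution iff $k_i+k_j\neq 0$ --- is a genuine mathematical assertion, and it is correct (this is the standard spectrum-disjointness criterion), though the paper never states or uses uniqueness in that form.
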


We can then start from the elements in \eqref{cond-2}
and replace Theorem \ref{T:1} as the following.

\begin{Theorem}
\label{T:2}
Suppose that $\Ga$ is a $N\times N$ matrix in canonical form satisfying necessary invertible condition
described in Proposition \ref{P:1}, $\bM, \, \br,\, \tc$ are defined as in Theorem \ref{T:1}.
and they satisfy the equation set \eqref{cond-2}.
By them we define
\begin{subequations}
\label{SSV}
\begin{align}
& S^{(i,j)}=  \tc\, \Ga^j(\bI+ \bM)^{-1}\Ga^i \br,\label{SSV-Sij}\\
& S(a,b)=  \tc (b\bI+\Ga)^{-1} (\bI+ \bM)^{-1} (a \bI+ \Ga)^{-1} \br,\label{SSV-Sab}
\\
&V(a)=1-\tc(\bI+\bM)^{-1}(a \bI+ \Ga)^{-1}\br,
\label{SSV-V}\\
&Z(a,b) = \tc (b \bI+ \Ga)^{-2}(\bI+ \bM)^{-1}(a \bI+ \Ga)^{-1}\br.\label{SSV-Z}
\end{align}
\end{subequations}
Then, $S^{(i,j)}$ obeys the recurrence relation \eqref{Sij-shift}.
If $S^{(i,j)}$ has the symmetric property
\begin{equation}
S^{(i,j)}=S^{(j,i)},
\label{Sij-sym-2}
\end{equation}
then $S(a,b)$ and $V(a)$ also have symmetric property
\begin{subequations}
\label{SV-sym}
\begin{align}
& S(a,b)=  S(b,a),\label{SV-sym-S}\\
&V(a)=1-\tc (\bI+\bM)^{-1}(a\bI+ \bK)^{-1}\br=1-\tc(a \bI+ \bK)^{-1}(\bI+\bM)^{-1}\br,
\label{SV-sym-V}
\end{align}
\end{subequations}
and they satisfy the shift relation \eqref{Sab-2-re}.
\end{Theorem}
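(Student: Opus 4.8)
The plan is to obtain Theorem~\ref{T:2} as an essentially immediate consequence of Theorem~\ref{T:1} together with the similarity invariance of the scalar functions recorded just above Proposition~\ref{P:1}. First I would observe that the canonical equation set \eqref{cond-2} is literally the starting matrix equation set \eqref{cond} with $\bK$ replaced by $\Ga$: the first two relations have the same structure $(s\bI-\Ga)\wt{\br}=(s\bI+\Ga)\br$ (and its hat-analogue), and the third is the bilinear relation $\bM\Ga+\Ga\bM=\br\,\tc$. Since Theorem~\ref{T:1} was already stated for an arbitrary constant matrix $\bK=(K_{i,j})_{N\times N}$ --- not necessarily diagonal --- and since its proof uses only that $\Ga$ commutes with $p\bI\pm\Ga$, $q\bI\pm\Ga$, $(a\bI+\Ga)^{-1}$ and $(b\bI+\Ga)^{-1}$ (all polynomials or rational functions in $\Ga$, so the ``diagonal'' remark there is inessential) and that Lemma~\ref{L:1-1}, which is proved for a general matrix, applies, the entire derivation carries over with $\bK$ replaced by $\Ga$.

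Concretely, I would first reprove the auxiliary $\Ga$--$\bM$ shift relations $(p\bI-\Ga)\wt{\bM}=(p\bI+\Ga)\bM$ and $(q\bI-\Ga)\wh{\bM}=(q\bI+\Ga)\bM$ exactly as in \eqref{t1}--\eqref{cond-M-th}, using $\wt{\phantom{a}}$ of \eqref{cond-2c}, relation \eqref{cond-2a}, and the anticommutator form of Lemma~\ref{L:1-1}; the invertibility hypothesis of Proposition~\ref{P:1} guarantees all inverses appearing exist. From these one obtains the four dynamical relations \eqref{eq:M-dyna-g} for $\Ga$, and then, introducing $\bu^{(i)}=(\bI+\bM)^{-1}\Ga^i\br$ and $\bu(a)=(\bI+\bM)^{-1}(a\bI+\Ga)^{-1}\br$, the vector shift relations \eqref{ui-shift} and \eqref{ua-t}; left-multiplying by $\tc\,\Ga^j$ and by $\tc(b\bI+\Ga)^{-1}$ respectively then yields \eqref{Sij-shift} and \eqref{Sab-2-re} verbatim. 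The implication $S^{(i,j)}=S^{(j,i)}\Rightarrow S(a,b)=S(b,a)$ and the second expression for $V(a)$ follow from the same formal double-series expansion of $S(a,b)$ in powers of $1/a$ and $1/b$ used in the proof of Theorem~\ref{T:1}.

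For consistency with the original non-canonical formulation I would then invoke the displayed identities preceding Proposition~\ref{P:1}: the similarity transform \eqref{eq:-Mrtc-1} leaves $S^{(i,j)}$, $S(a,b)$, $V(a)$ and $Z(a,b)$ unchanged, so the objects built from $\Ga$ in \eqref{SSV} coincide with those built from the original $\bK$, and in particular the symmetric expression \eqref{SV-sym-V} may be written with either $\bK$ or $\Ga$. (The function $Z(a,b)$ of \eqref{SSV-Z} plays no role in the conclusion of this theorem; it is introduced only for later use.)

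I do not expect a genuine obstacle: the content of the theorem is a bookkeeping step that transports Theorem~\ref{T:1} to the canonical setting, so that the forthcoming classification of all admissible $\Ga$ --- hence of all solution triads $\{\bK,\br,\bM\}$ --- can be fed back into the lattice equations of Section~\ref{sec:3.2}. The only points needing a word of care are checking that every inverse invoked in the Theorem~\ref{T:1} argument, namely $(s\bI\pm\Ga)^{-1}$ for $s\in\{0,p,q,a,b\}$, is covered by Proposition~\ref{P:1}, and noting explicitly that Lemma~\ref{L:1-1} is legitimately applied to $\Ga$ itself, which is valid since that lemma is established for arbitrary $\bK\in\mathbb{C}_{N\times N}$.
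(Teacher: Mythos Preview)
Your proposal is correct and matches the paper's treatment: the paper does not give a separate proof of Theorem~\ref{T:2} but presents it as an immediate restatement of Theorem~\ref{T:1} after the similarity-invariance discussion preceding Proposition~\ref{P:1}, exactly as you argue. Your explicit observation that the ``as being diagonals'' remark in the proof of Theorem~\ref{T:1} is inessential --- since the commuting matrices are all rational functions of a single $\Ga$ --- is the one point the paper leaves tacit, and you have handled it correctly.
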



\subsection{Solutions to \eqref{cond-2}}\label{sec:4.2}

Noting that in the equation set \eqref{cond-2} $\tc$ is known and $\Ga$ is a matrix in canonical form,
it is possible to give a complete discussion for the solution pair $\{\br,\,\bM\}$ of \eqref{cond-2}
according to the eigenvalue structure of $\Ga$.

\subsubsection{List of notations}\label{sec:4.2.1}

First we need to introduce some notations where usually the subscripts $_D$ and $_J$ correspond to
the cases of $\Ga$ being diagonal and being of Jordan block, respectively.
\begin{subequations}\label{notations}
\begin{align}
& \hbox{plain wave factor:}~~\rho_i=\biggl(\frac{p +k_i}{p -k_i}\biggr)^n\biggl(\frac{q + k_i}{q-k_i}\biggr)^m \rho^0_{i},~~
\mathrm{with~ constants~}p,q,\rho^0_{i},\\
& N\mathrm{\hbox{-}th~order~vector:}~~\br_{\hbox{\tiny{\it D}}}^{\hbox{\tiny{[{\it N}]}}}(\{k_j\}_{1}^{N})=(\rho_1, \rho_2, \cdots, \rho_N)^T,\\
& N\mathrm{\hbox{-}th~order~vector:}~~\br_{\ty{J}}^{\tyb{N}}(k_1)=\Bigl(\rho_1, \frac{\partial_{k_1}\rho_1}{1!},
\cdots, \frac{\partial^{N-1}_{k_1}\rho_1}{(N-1)!}\Bigr)^T,\\
& N\mathrm{\hbox{-}th~order~vector:}~~\bI_{\ty{D}}^{\tyb{N}}=(1,1,1, \cdots, 1)^T,\\
& N\mathrm{\hbox{-}th~order~vector:}~~\bI_{\ty{J}}^{\tyb{N}}=(1,0,0, \cdots, 0)^T,\\
& N\mathrm{\hbox{-}th~order~vector:}~~g^{\tyb{N}}(a)=\Bigl(\frac{1}{a},\frac{-1}{a^2},\frac{1}{a^3}, \cdots,\frac{(-1)^{N-1}}{a^{N}}\Bigr)^T,\\
& N\times N ~\mathrm{matrix:}~~\Ga^{\tyb{N}}_{\ty{D}}(\{k_j\}^{N}_{1})=\mathrm{Diag}(k_1, k_2, \cdots, k_N),\\
& N\times N ~\mathrm{matrix:}~~\Ga^{\tyb{N}}_{\ty{J}}(a)
=\left(\begin{array}{cccccc}
a & 0    & 0   & \cdots & 0   & 0 \\
1   & a  & 0   & \cdots & 0   & 0 \\
0   & 1  & a   & \cdots & 0   & 0 \\
\vdots &\vdots &\vdots &\vdots &\vdots &\vdots \\
0   & 0    & 0   & \cdots & 1   & a
\end{array}\right),\\
& N\times N ~\mathrm{matrix:}~~\bF^{\tyb{N}}_{\ty{D}}(\{k_j\}^{N}_{1})=\mathrm{Diag}(\rho_1, \rho_2, \cdots, \rho_N),\\
& N\times N ~\mathrm{matrix:}~~\bH^{\tyb{N}}_{\ty{D}}(\{c_j\}^{N}_{1})=\mathrm{Diag}(c_1, c_2, \cdots, c_N),\\
& N\times N ~\mathrm{matrix:}~~\bF^{\tyb{N}}_{\ty{J}}(k_1)
=\left(
\begin{array}{ccccc}
\rho_1 & 0 & 0 & \cdots & 0\\
\frac{\partial_{k_1}\rho_1}{1!} & \rho_1 & 0 & \cdots & 0\\
\frac{\partial^{2}_{k_1}\rho_1}{2!} &\frac{\partial_{k_1}\rho_1}{1!} & \rho_1 & \cdots & 0\\
\vdots &\vdots &\vdots & \ddots & \vdots\\
\frac{\partial^{N-1}_{k_1}\rho_1}{(N-1)!} & \frac{\partial^{N-2}_{k_1}\rho_1 }{(N-2)!} & \frac{\partial^{N-3}_{k_1}\rho_1}{(N-3)!} & \cdots & \rho_1
\end{array}
\right),\\
& N\times N ~\mathrm{matrix:}~~\bH^{\tyb{N}}_{\ty{J}}(\{c_j\}^{N}_{1})
=\left(\begin{array}{ccccc}
c_1 & \cdots  & c_{N-2}  & c_{N-1} & c_N\\
c_2 & \cdots & c_{N-1}  & c_N & 0\\
c_3 &\cdots & c_N & 0 & 0\\
\vdots &\vdots & \vdots & \vdots & \vdots\\
c_N & \cdots & 0 & 0 & 0
\end{array}
\right),
\end{align}
\begin{align}
& N\times N ~\mathrm{matrix:}~~\bG^{\tyb{N}}_{\ty{D}}(\{k_j\}^{N}_{1})
=(g_{i,j})_{N\times N},~~~g_{i,j}=\frac{1}{k_i+k_j},\\
& N_1\times N_2 ~\mathrm{matrix:}~~\bG^{\tyb{N$_1$,N$_2$}}_{\ty{DJ}}(\{k_j\}^{N_1}_{1};a)
=(g_{i,j})_{N_1\times N_2},~~~g_{i,j}=-\Bigl(\frac{-1}{k_i+a}\Bigr)^j,\\
& N_1\times N_2 ~\mathrm{matrix:}~~\bG^{\tyb{N$_1$,N$_2$}}_{\ty{JJ}}(a;b)
=(g_{i,j})_{N_1\times N_2},~~~g_{i,j}=\mathrm{C}^{i-1}_{i+j-2}\frac{(-1)^{i+j}}{(a+b)^{i+j-1}},\\
& N\times N ~\mathrm{matrix:}~~\bG^{\tyb{N}}_{\ty{J}}(a)=\bG^{\tyb{N,N}}_{\ty{JJ}}(a;a)
=(g_{i,j})_{N\times N},~~~g_{i,j}=\mathrm{C}^{i-1}_{i+j-2}\frac{(-1)^{i+j}}{(2a)^{i+j-1}},\label{G-J-a}
\end{align}
\end{subequations}
where
\[\mathrm{C}^{i}_{j}=\frac{j!}{i!(j-i)!},~~(j\geq i).\]

The $N$th-order matrix  in the following form
\begin{equation}
\mathcal{A}=\left(\begin{array}{cccccc}
a_0 & 0    & 0   & \cdots & 0   & 0 \\
a_1 & a_0  & 0   & \cdots & 0   & 0 \\
a_2 & a_1  & a_0 & \cdots & 0   & 0 \\
\vdots &\vdots &\cdots &\vdots &\vdots &\vdots \\
a_{N-1} & a_{N-2} & a_{N-3}  & \cdots &  a_1   & a_0
\end{array}\right)_{N\times N}
\label{A}
\end{equation}
with scalar elements $\{a_j\}$ is  a $N$th-order lower triangular Toeplitz matrix.
All such matrices compose a commutative set $\widetilde{G}^{\tyb{N}}$ with respect to matrix multiplication
and the subset
\[G^{\tyb{N}}=\big \{\mathcal{A} \big |~\big. \mathcal{A}\in \widetilde{G}^{\tyb{N}},~|\mathcal{A}|\neq 0 \big\}\]
is an Abelian group.
Such kind of matrices play useful roles in the expression of exact solution for soliton equations.
For more properties of such matrices one can refer to Ref.\cite{Zhang-KdV}.

For the notations we just listed out, it is easy to find the following facts.
\begin{Proposition}
\label{P:2}
~\\
\textrm{(1).}~ Both $\bH^{\tyb{N}}_{\ty{J}}(\{c_j\}^{N}_{1})$ and $\bG^{\tyb{N}}_{\ty{J}}(a)$ are symmetric matrices;\\
\textrm{(2).}~ Both $\Ga^{\tyb{N}}_{\ty{J}}(a)$ and $\bF^{\tyb{N}}_{\ty{J}}(k_1)$ belong to $\widetilde{G}^{\tyb{N}}$
and therefore they commute;\\
\textrm{(3).}~ For $\bH^{\tyb{N}}_{\ty{J}}(\{c_j\}^{N}_{1})$ and $\forall \mathcal{A}\in \widetilde{G}^{\tyb{N}}$, they satisfy
\begin{equation}
\bH^{\tyb{N}}_{\ty{J}}(\{c_j\}^{N}_{1})\mathcal{A}=\mathcal{A}^T \bH^{\tyb{N}}_{\ty{J}}(\{c_j\}^{N}_{1}),
\label{commut-1}
\end{equation}
or, in other words, $\bH^{\tyb{N}}_{\ty{J}}(\{c_j\}^{N}_{1})\mathcal{A}$ is a symmetric matrix;\\
\textrm{(4).}~ When $a\neq 0$, $\Ga^{\tyb{N}}_{\ty{J}}(a)^{-1}$ is given by
\begin{equation}
\Ga^{\tyb{N}}_{\ty{J}}(a)^{-1}=\left(\begin{array}{cccccc}
\frac{1}{a} & 0    & 0   & \cdots & 0   & 0 \\
\frac{-1}{a^2} & \frac{1}{a}  & 0   & \cdots & 0   & 0 \\
\frac{1}{a^3} & \frac{-1}{a^2}  & \frac{1}{a} & \cdots & 0   & 0 \\
\vdots &\vdots &\cdots &\vdots &\vdots &\vdots \\
\frac{(-1)^{N-1}}{a^{N}} & \frac{(-1)^{N-2}}{a^{N-1}} & \frac{(-1)^{N-3}}{a^{N-2}}  & \cdots & \frac{-1}{a^2} & \frac{1}{a}
\end{array}\right).
\label{A-inv}
\end{equation}
\end{Proposition}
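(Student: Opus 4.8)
The plan is to read off all four assertions directly from the explicit entry formulae collected above, using throughout the standing conventions that $c_m=0$ whenever $m<1$ or $m>N$, that $a_l=0$ for $l<0$, and that the $(k,j)$ entry of a lower triangular Toeplitz matrix $\mathcal{A}$ as in \eqref{A} equals $a_{k-j}$.

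For part (1): the $(i,j)$ entry of $\bH^{\tyb{N}}_{\ty{J}}(\{c_j\}^{N}_{1})$ is $c_{i+j-1}$, which depends on the pair $(i,j)$ only through $i+j$, so this matrix is symmetric; likewise the $(i,j)$ entry $\mathrm{C}^{i-1}_{i+j-2}(-1)^{i+j}/(2a)^{i+j-1}$ of $\bG^{\tyb{N}}_{\ty{J}}(a)$ is invariant under $i\leftrightarrow j$ because $\mathrm{C}^{i-1}_{i+j-2}=\mathrm{C}^{j-1}_{i+j-2}$ by the reflection identity for binomial coefficients (the lower indices $i-1$ and $j-1$ are complementary in $i+j-2$). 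For part (2): $\Ga^{\tyb{N}}_{\ty{J}}(a)=a\bI+\bN$, where $\bN$ is the $N\times N$ nilpotent matrix with $1$'s on the first subdiagonal, so it has the shape \eqref{A} with $a_0=a$, $a_1=1$, $a_l=0$ for $l\ge 2$; and the $(i,j)$ entry of $\bF^{\tyb{N}}_{\ty{J}}(k_1)$ is $\partial^{i-j}_{k_1}\rho_1/(i-j)!$ for $i\ge j$ and $0$ otherwise, which depends only on $i-j$, so $\bF^{\tyb{N}}_{\ty{J}}(k_1)$ has the shape \eqref{A} with $a_l=\partial^{l}_{k_1}\rho_1/l!$. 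Hence both lie in $\widetilde{G}^{\tyb{N}}$, and since $\widetilde{G}^{\tyb{N}}$ is commutative under matrix multiplication they commute.

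For part (3): because $\bH^{\tyb{N}}_{\ty{J}}(\{c_j\}^{N}_{1})$ is symmetric by part (1), the asserted identity $\bH^{\tyb{N}}_{\ty{J}}(\{c_j\}^{N}_{1})\mathcal{A}=\mathcal{A}^T\bH^{\tyb{N}}_{\ty{J}}(\{c_j\}^{N}_{1})$ is equivalent to $\bH^{\tyb{N}}_{\ty{J}}(\{c_j\}^{N}_{1})\mathcal{A}$ being a symmetric matrix, so I would just compute that product entrywise: $\big(\bH^{\tyb{N}}_{\ty{J}}(\{c_j\}^{N}_{1})\mathcal{A}\big)_{i,j}=\sum_{k}c_{i+k-1}\,a_{k-j}$, and the substitution $l=k-j$ turns this into $\sum_{l\ge 0}c_{i+j-1+l}\,a_l$, a sum with only finitely many nonzero terms that depends on $(i,j)$ only through $i+j$; hence the product is symmetric. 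For part (4): writing $\Ga^{\tyb{N}}_{\ty{J}}(a)=a\bI+\bN$, the Neumann-type series $(a\bI+\bN)^{-1}=\sum_{l=0}^{N-1}(-1)^l a^{-l-1}\bN^l$ terminates because $\bN^N=0$, and since $\bN^l$ carries $1$'s on the $l$-th subdiagonal, reading off the $l$-th subdiagonal of the sum reproduces exactly the matrix displayed in \eqref{A-inv}; alternatively one multiplies $\Ga^{\tyb{N}}_{\ty{J}}(a)$ by that matrix and sees the product telescope to $\bI$.

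None of the four parts hides a genuine difficulty: each reduces to inspecting an explicit entry formula together with the zero conventions above. The single step that requires one line of care is the index shift $l=k-j$ in part (3), which is precisely what exposes the dependence on $i+j$, and hence the symmetry of $\bH^{\tyb{N}}_{\ty{J}}(\{c_j\}^{N}_{1})\mathcal{A}$.
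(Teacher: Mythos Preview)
Your proof is correct. The paper does not supply a proof of this proposition at all; it introduces the statement with ``it is easy to find the following facts'' and moves on, so your entrywise verification is precisely the kind of direct check the authors had in mind.
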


\subsubsection{Basic cases}\label{sec:4.2.2}

Let us discuss solutions for the equation set \eqref{cond-2} when $\Ga$ takes two basic forms.
Notations can be referred to Sec.\ref{sec:4.2.1}.

\vskip 5pt
\noindent
\textit{Case 1.~} $\Ga=\Ga^{\tyb{N}}_{\ty{D}}(\{k_j\}^{N}_{1})$.

This case has been solved in Ref.\cite{Nijhoff-ABS} as well as in Sec.\ref{sec:2}. Solutions to \eqref{cond-2} are
\begin{subequations}
\begin{align}
\br & =\br_{\hbox{\tiny{\it D}}}^{\hbox{\tiny{[{\it N}]}}}(\{k_j\}_{1}^{N}),\\
\bM & =\bF^{\tyb{N}}_{\ty{D}}(\{k_j\}^{N}_{1})\bG^{\tyb{N}}_{\ty{D}}(\{k_j\}^{N}_{1})\bH^{\tyb{N}}_{\ty{D}}(\{c_j\}^{N}_{1})
=\Bigl(\frac{\rho_i c_j}{k_i+k_j}\Bigr)_{N\times N}.
\end{align}
\label{sol-diag}
\end{subequations}
In this case one gets soliton solutions \cite{Nijhoff-ABS}.

\vskip 5pt
\noindent
\textit{Case 2.~} $\Ga=\Ga^{\tyb{N}}_{\ty{J}}(k_1)$.

This is also called Jordan block case. Motivated by the Jordan block case of many other soliton equations\cite{Zhang-KdV,Hietarinta-Bou},
it is not difficult to find a solution for the equations \eqref{cond-2a} and  \eqref{cond-2b},
\begin{equation}
\br=\br_{\ty{J}}^{\tyb{N}}(k_1).
\end{equation}
To find a solution $\bM$ to \eqref{cond-2c}, we first rewrite
\begin{equation}
\br_{\ty{J}}^{\tyb{N}}(k_1)=\bF\cdot\bI^{\tyb{N}}_{\ty{J}},~~~
\tc={\bI^{\tyb{N}}_{\ty{J}}}^T\cdot \bH,~~~
\bM=\bF  \bG  \bH,
\end{equation}
where $\bF=\bF^{\tyb{N}}_{\ty{J}}(k_1)$, $\bH=\bH^{\tyb{N}}_{\ty{J}}(\{c_j\}^{N}_{1})$ and
$\bG$ is an unknown matrix.
It then follows from \eqref{cond-2c} that
\begin{equation*}
\bF \bG\bH \Ga  +\Ga  \bF \bG\bH
=\bF\cdot\bI^{\tyb{N}}_{\ty{J}}\cdot {\bI^{\tyb{N}}_{\ty{J}}}^T\cdot \bH,
\end{equation*}
where $\Ga=\Ga^{\tyb{N}}_{\ty{J}}(k_1)$.
Noting that the item (2) and (3) in Proposition \ref{P:2}, we then have
\begin{equation*}
\bF \bG {\Ga}^T \bH + \bF \Ga \bG\bH
=\bF\cdot\bI^{\tyb{N}}_{\ty{J}}\cdot {\bI^{\tyb{N}}_{\ty{J}}}^T\cdot \bH,
\end{equation*}
which further reduces to
\begin{equation}
\bG {\Ga}^T  +  \Ga \bG
=\bI^{\tyb{N}}_{\ty{J}}\cdot {\bI^{\tyb{N}}_{\ty{J}}}^T.
\label{G-J-eq}
\end{equation}
Thus we need to solve the above equation. To do that, we write
\begin{equation}
\bG=(g_1,g_2,\cdots,g_N),
\label{G-J-c}
\end{equation}
where $\{g_j\}$ are column vectors of $\bG$.
Then from \eqref{G-J-eq} we find
\begin{align*}
& (k_1+ \Ga) g_1= \bI^{\tyb{N}}_{\ty{J}},\\
& (k_1+ \Ga) g_{j+1}+g_{j}=0,~~ ~(j=1,2,\cdots, N-1).
\end{align*}
This indicates solutions
\begin{subequations}
\begin{align}
& g_1={\Ga^{\tyb{N}}_{\ty{J}}(2k_1)}^{-1}\cdot {\bI^{\tyb{N}}_{\ty{J}}}=g^{\tyb{N}}(2k_1),\\
&  g_{j+1}=\frac{\partial_a^j g^{\tyb{N}}(a)|_{a=2k_1}}{j!},~~ ~(j=1,2,\cdots, N-1),
\end{align}
\label{g-C2}
\end{subequations}
where we have made use of the item (4) in Proposition \ref{P:2}.
Substituting \eqref{g-C2} into \eqref{G-J-c} we can find
\[\bG=\bG^{\tyb{N}}_{\ty{J}}(k_1)\]
with explicit expression \eqref{G-J-a}.

Thus, in this case, the solution pair to \eqref{cond-2} is given by
\begin{subequations}\label{rM-J}
\begin{align}
& \br=\br_{\ty{J}}^{\tyb{N}}(k_1),\\
& \bM= \bF^{\tyb{N}}_{\ty{J}}(k_1)\cdot \bG^{\tyb{N}}_{\ty{J}}(k_1)\cdot \bH^{\tyb{N}}_{\ty{J}}(\{c_j\}^{N}_{1}).
\end{align}
\end{subequations}
Besides, using the commutative property of lower triangular Toeplitz matrices in the  set $\t G^{\tyb{N}}$ we have the following result.
\begin{Proposition}\label{P:3}
For an arbitrary constant matrix $\mathcal{A}\in \t G^{\tyb{N}}$, $\mathcal{A}\br$ and $\mathcal{A}\bM$
solve the equation set \eqref{cond-2},
where $\br$  and $\bM$ are defined in \eqref{rM-J}.
\end{Proposition}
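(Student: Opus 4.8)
The plan is to verify that the pair $\{\mathcal{A}\br,\,\mathcal{A}\bM\}$ satisfies all three equations of \eqref{cond-2} directly, using only the fact that $\br=\br_{\ty{J}}^{\tyb{N}}(k_1)$ and $\bM=\bF^{\tyb{N}}_{\ty{J}}(k_1)\bG^{\tyb{N}}_{\ty{J}}(k_1)\bH^{\tyb{N}}_{\ty{J}}(\{c_j\}^{N}_{1})$ solve \eqref{cond-2}, together with commutation properties of lower triangular Toeplitz matrices collected in Proposition \ref{P:2}. Since in this case $\Ga=\Ga^{\tyb{N}}_{\ty{J}}(k_1)$ lies in $\t G^{\tyb{N}}$ (item (2) of Proposition \ref{P:2}) and $\t G^{\tyb{N}}$ is commutative, $\mathcal{A}$ commutes with $\Ga$; this is the structural fact that makes everything work.

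First I would handle the two vector equations \eqref{cond-2a} and \eqref{cond-2b}. Applying $\mathcal{A}$ on the left of \eqref{cond-2a} and using $\mathcal{A}\Ga=\Ga\mathcal{A}$, one gets
\begin{equation*}
(p\bI-\Ga)\mathcal{A}\wt{\br}=\mathcal{A}(p\bI-\Ga)\wt{\br}=\mathcal{A}(p\bI+\Ga)\br=(p\bI+\Ga)\mathcal{A}\br,
\end{equation*}
and since $\mathcal{A}$ is a constant matrix, $\mathcal{A}\wt{\br}=\widetilde{\mathcal{A}\br}$, so $\mathcal{A}\br$ solves \eqref{cond-2a}; the same argument with $q$ and the hat-shift gives \eqref{cond-2b}. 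For the matrix equation \eqref{cond-2c}, I would compute
\begin{equation*}
(\mathcal{A}\bM)\Ga+\Ga(\mathcal{A}\bM)=\mathcal{A}\bM\Ga+\mathcal{A}\Ga\bM=\mathcal{A}(\bM\Ga+\Ga\bM)=\mathcal{A}\br\,\tc=(\mathcal{A}\br)\,\tc,
\end{equation*}
where the second equality again uses $\mathcal{A}\Ga=\Ga\mathcal{A}$ and the third uses that $\bM,\br$ already satisfy \eqref{cond-2c}. Hence $\{\mathcal{A}\br,\,\mathcal{A}\bM\}$ solves \eqref{cond-2c} as well.

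The only genuine point to be careful about is the commutation $\mathcal{A}\Ga=\Ga\mathcal{A}$: it is exactly here that we use $\Ga=\Ga^{\tyb{N}}_{\ty{J}}(k_1)\in\t G^{\tyb{N}}$, which holds by item (2) of Proposition \ref{P:2}, and the commutativity of the set $\t G^{\tyb{N}}$ of lower triangular Toeplitz matrices. No nondegeneracy of $\mathcal{A}$ is needed for the solution property — invertibility of $\mathcal{A}$ (i.e. $\mathcal{A}\in G^{\tyb{N}}$) would only matter if one wanted $\mathcal{A}\bM$ to again be of the precise dressed form $\bF\bG\bH$, which is not asserted here. So I expect no real obstacle: the statement reduces to three one-line manipulations once the Toeplitz commutation is invoked. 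One may optionally remark that the resulting $S^{(i,j)}$, $S(a,b)$ and $V(a)$ built from $\{\mathcal{A}\br,\mathcal{A}\bM\}$ differ from those built from $\{\br,\bM\}$, which is the source of the ``more solutions than solitons'' claimed in the introduction.
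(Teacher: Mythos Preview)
Your proof is correct and is exactly the argument the paper has in mind: the paper does not spell out a proof but merely remarks that the result follows ``using the commutative property of lower triangular Toeplitz matrices in the set $\widetilde{G}^{\tyb{N}}$'', which is precisely the commutation $\mathcal{A}\Ga=\Ga\mathcal{A}$ you invoke. Your three one-line verifications of \eqref{cond-2a}--\eqref{cond-2c} are the intended expansion of that remark.
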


\subsubsection{Generic case}\label{sec:4.2.3}

To investigate solutions to the equation set \eqref{cond-2} with a generic $\Ga$,
we need to consider some elementary equations.

\begin{lem}
\label{L:4-1}
Consider the following three elementary matrix equations:
\begin{equation}
\bG \,\Ga^{\tyb{N$_1$}}_{\ty{D}}(\{k_j\}^{N_1}_{1})+ \Ga^{\tyb{N$_1$}}_{\ty{D}}(\{k_j\}^{N_1}_{1})\bG
= \bI_{\ty{D}}^{\tyb{N$_1$}}\cdot {\bI_{\ty{D}}^{\tyb{N$_1$}}}^T,
\label{ele-eq-1}
\end{equation}
\begin{equation}
\bG \,{\Ga^{\tyb{N$_2$}}_{\ty{J}}(b)}^T+ \Ga^{\tyb{N$_1$}}_{\ty{D}}(\{k_j\}^{N_1}_{1})\,\bG
= \bI_{\ty{D}}^{\tyb{N$_1$}}\cdot {\bI_{\ty{J}}^{\tyb{N$_2$}}}^T,
\label{ele-eq-2}
\end{equation}
\begin{equation}
\bG\, {\Ga^{\tyb{N$_2$}}_{\ty{J}}(b)}^T+ \Ga^{\tyb{N$_1$}}_{\ty{J}}(a)\,\bG
= \bI_{\ty{J}}^{\tyb{N$_1$}}\cdot {\bI_{\ty{J}}^{\tyb{N$_2$}}}^T,
\label{ele-eq-3}
\end{equation}
where the unknown matrix $\bG$ in the above three equations is of respectively $N_1\times N_1$, $N_1\times N_2$ and $N_1\times N_2$.
For solutions  we have
$\bG=\bG^{\tyb{N$_1$}}_{\ty{D}}(\{k_j\}^{N_1}_{1})$ solves \eqref{ele-eq-1},
$\bG=\bG^{\tyb{N$_1$,N$_2$}}_{\ty{DJ}}(\{k_j\}^{N_1}_{1};b)$ solves \eqref{ele-eq-2},
and $\bG=\bG^{\tyb{N$_1$,N$_2$}}_{\ty{JJ}}(a;b)$ solves \eqref{ele-eq-3}.
\end{lem}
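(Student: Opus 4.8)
The plan is to observe that all three identities in Lemma~\ref{L:4-1} are Sylvester-type equations $\bA\bG+\bG\bB=\bu\,\bv^{T}$ with a rank-one right-hand side, the only variation being whether $\bA$ (respectively $\bB^{T}$) is diagonal or a single Jordan block. In each case I would read the matrix equation off entry by entry, turn it into a scalar recurrence for the entries $g_{i,j}$ of $\bG$, solve that recurrence, and check that the solution coincides with the matrix listed in the statement. Throughout, the standing non-degeneracy hypotheses of Proposition~\ref{P:1} (in particular \eqref{kij-3} and the invertibility of $s\bI\pm\Ga$) guarantee that no denominator occurring in the recurrences vanishes, and, the relevant spectra of $\bA$ and $-\bB$ being then disjoint, that the $\bG$ produced is in fact the unique solution.

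For \eqref{ele-eq-1}: since $\Ga^{\tyb{N$_1$}}_{\ty{D}}$ is diagonal and $\bI_{\ty{D}}^{\tyb{N$_1$}}(\bI_{\ty{D}}^{\tyb{N$_1$}})^{T}$ is the all-ones matrix, the $(i,j)$ entry gives $(k_{i}+k_{j})g_{i,j}=1$, hence $g_{i,j}=1/(k_{i}+k_{j})$, which is exactly $\bG^{\tyb{N$_1$}}_{\ty{D}}(\{k_{j}\}^{N_1}_{1})$; the denominators are nonzero by \eqref{kij-3}. For \eqref{ele-eq-2}, note that $(\Ga^{\tyb{N$_2$}}_{\ty{J}}(b))^{T}$ carries $b$ on its diagonal and $1$ on its superdiagonal, so the $(i,j)$ entry of the equation reads $(k_{i}+b)g_{i,j}+g_{i,j-1}=\delta_{j,1}$ with the convention $g_{i,0}=0$. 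For each fixed $i$ this is a first-order recurrence in $j$ with $g_{i,1}=1/(k_{i}+b)$ and $g_{i,j}=-g_{i,j-1}/(k_{i}+b)$ thereafter, whose solution $g_{i,j}=-(-1/(k_{i}+b))^{j}$ is precisely $\bG^{\tyb{N$_1$,N$_2$}}_{\ty{DJ}}(\{k_{j}\}^{N_1}_{1};b)$, using $k_{i}+b\neq 0$.

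The substantive case is \eqref{ele-eq-3}, where $\Ga^{\tyb{N$_1$}}_{\ty{J}}(a)$ contributes a $1$ on the subdiagonal, $(\Ga^{\tyb{N$_2$}}_{\ty{J}}(b))^{T}$ a $1$ on the superdiagonal, and the right-hand side has a single nonzero entry, in position $(1,1)$. Reading off entries yields the genuine two-index recurrence $(a+b)g_{i,j}+g_{i-1,j}+g_{i,j-1}=\delta_{i,1}\delta_{j,1}$ with $g_{0,j}=g_{i,0}=0$. I would resolve this by induction on $i+j$: the base case gives $g_{1,1}=1/(a+b)$, and for $(i,j)\neq(1,1)$ one has $(a+b)g_{i,j}=-(g_{i-1,j}+g_{i,j-1})$. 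Substituting the claimed closed form $g_{i,j}=\mathrm{C}^{i-1}_{i+j-2}(-1)^{i+j}/(a+b)^{i+j-1}$, the induction step collapses to Pascal's rule $\mathrm{C}^{i-1}_{i+j-3}+\mathrm{C}^{i-2}_{i+j-3}=\mathrm{C}^{i-1}_{i+j-2}$, the boundary cases $i=1$ and $j=1$ being absorbed by the convention that a binomial coefficient with negative lower index, or lower index exceeding the upper one, is zero.

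I expect the matching in \eqref{ele-eq-3} of the two-index recurrence with the binomial coefficients — in particular lining up the degenerate-index boundary cases with Pascal's identity — to be the only step requiring genuine care; everything else is bookkeeping. As an alternative, one may note that in \eqref{ele-eq-2} the $i$-th row of $\bG$ satisfies $(k_{i}\bI+\Ga^{\tyb{N$_2$}}_{\ty{J}}(b))r=\bI^{\tyb{N$_2$}}_{\ty{J}}$, so by item~(4) of Proposition~\ref{P:2} it is the first column of $\Ga^{\tyb{N$_2$}}_{\ty{J}}(k_{i}+b)^{-1}$, namely $g^{\tyb{N$_2$}}(k_{i}+b)^{T}$; and in \eqref{ele-eq-3} one has, after analytic continuation in $a,b$ and whenever $a+b\neq 0$, the integral representation $\bG=\int_{0}^{\infty}(\exp(-t\Ga^{\tyb{N$_1$}}_{\ty{J}}(a))\bI^{\tyb{N$_1$}}_{\ty{J}})(\exp(-t\Ga^{\tyb{N$_2$}}_{\ty{J}}(b))\bI^{\tyb{N$_2$}}_{\ty{J}})^{T}\,dt$, which reproduces $\mathrm{C}^{i-1}_{i+j-2}(-1)^{i+j}/(a+b)^{i+j-1}$ upon expanding the exponentials and integrating $t^{i+j-2}\exp(-(a+b)t)$ — a recurrence-free route that makes the shape of all three answers transparent.
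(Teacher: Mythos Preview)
Your proof is correct and follows essentially the same route as the paper's: direct entry-wise verification for \eqref{ele-eq-1}, and reduction to a first-order recursion in the column index for \eqref{ele-eq-2} and \eqref{ele-eq-3}. The only cosmetic difference is that for \eqref{ele-eq-3} the paper packages the recursion column-by-column and writes the answer as $g_{j+1}=\partial_{\beta}^{\,j}g^{\tyb{N$_1$}}(\beta)\big|_{\beta=a+b}/j!$ rather than checking the binomial entries via Pascal's rule; your integral-representation alternative does not appear in the paper.
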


\begin{proof}
$\bG=\bG^{\tyb{N$_1$}}_{\ty{D}}(\{k_j\}^{N_1}_{1})$ satisfying \eqref{ele-eq-1} can be verified directly.

Let us consider  equation \eqref{ele-eq-2}.
We suppose
\begin{equation}
\bG=(g_1,g_2,\cdots,g_{N_2})
\label{G-DJ-c}
\end{equation}
with column vectors $\{g_j\}$.
Then from \eqref{ele-eq-2} we have
\begin{align*}
& (b+\Ga^{\tyb{N$_1$}}_{\ty{D}}(\{k_j\}^{N}_{1})) g_1= \bI^{\tyb{N$_1$}}_{\ty{D}},\\
& (b+\Ga^{\tyb{N$_1$}}_{\ty{D}}(\{k_j\}^{N}_{1})) g_{j+1}+g_{j}=0,~~ ~(j=1,2,\cdots, N_2-1),
\end{align*}
which further yields
\[  g_{j}=(-1)^{j-1} \Bigl( \frac{1}{(k_1+b)^j},\frac{1}{(k_2+b)^j},\cdots,\frac{1}{(k_{N_1}+b)^j}\Bigr)^T,~~ ~(j=1,2,\cdots, N_2).
\]
The matrix \eqref{G-DJ-c} with above $\{g_j\}$ is just $\bG^{\tyb{N$_1$,N$_2$}}_{\ty{DJ}}(\{k_j\}^{N_1}_{1};b)$
and it solves \eqref{ele-eq-2}.

The solving procedure for \eqref{ele-eq-3} is quite similar to the Case 2 in Sec.\ref{sec:4.2.2}.
We rewrite $\bG$ as \eqref{G-DJ-c} and insert it into \eqref{ele-eq-3}
and then we find
\begin{align*}
& (b+ \Ga^{\tyb{N$_1$}}_{\ty{J}}(a)) g_1= \bI^{\tyb{N$_1$}}_{\ty{J}},\\
& (b+ \Ga^{\tyb{N$_1$}}_{\ty{J}}(a)) g_{j+1}+g_{j}=0,~~ ~(j=1,2,\cdots, N_2-1),
\end{align*}
which yields
\begin{subequations}
\begin{align}
& g_1=g^{\tyb{N$_1$}}(a+b),\\
&  g_{j+1}=\frac{\partial_{\beta}^j g^{\tyb{N$_1$}}(\beta)|_{\beta=a+b}}{j!},~~ ~(j=1,2,\cdots, N_2-1).
\end{align}
\end{subequations}
These column vectors compose a matrix $\bG=\bG^{\tyb{N$_1$,N$_2$}}_{\ty{JJ}}(a;b)$ that solves \eqref{ele-eq-3}.
\end{proof}

Now we can investigate the case with a generic $\Ga$:
\begin{equation}
\Ga=\mathrm{Diag}\bigl(\Ga^{\tyb{N$_1$}}_{\ty{D}}(\{k_j\}^{N_1}_{1}),
\Ga^{\tyb{N$_2$}}_{\ty{J}}(k_{N_1+1}),\Ga^{\tyb{N$_3$}}_{\ty{J}}(k_{N_1+2}),\cdots,
\Ga^{\tyb{N$_s$}}_{\ty{J}}(k_{N_1+(s-1)})\bigr).
\label{Ga-gen}
\end{equation}
For convenience we define column vectors
\begin{equation}
\br=\left(
\begin{array}{c}
\br_{\ty{D}}^{\tyb{N$_1$}}(\{k_j\}_{1}^{N_1})\\
\br_{\ty{J}}^{\tyb{N$_2$}}(k_{N_1+1})\\
\br_{\ty{J}}^{\tyb{N$_3$}}(k_{N_1+2})\\
\vdots\\
\br_{\ty{J}}^{\tyb{N$_s$}}(k_{N_1+(s-1)})
\end{array}
\right),~~~
\bI_{\ty{G}}=\left(
\begin{array}{c}
\bI_{\ty{D}}^{\tyb{N$_1$}}\\
\bI_{\ty{J}}^{\tyb{N$_2$}}\\
\bI_{\ty{J}}^{\tyb{N$_3$}}\\
\vdots\\
\bI_{\ty{J}}^{\tyb{N$_s$}}
\end{array}
\right).
\label{RIG}
\end{equation}
Obviously, such $\br$ solves \eqref{cond-2a} and \eqref{cond-2b} with $\Ga$ defined in \eqref{Ga-gen} and
\begin{equation}
\tc=(c_1,c_2,\cdots,c_{N_1},c_{N_1+1},\cdots, c_{N_1+N_2+\cdots+N_s}).
\label{ct}
\end{equation}

To solve \eqref{cond-2c}, we still suppose
\begin{subequations}
\label{M-g}
\begin{equation}
\bM=\bF\bG \bH,
\end{equation}
where
\begin{align}
&\bF=\mathrm{Diag}\bigl(
\bF^{\tyb{N$_1$}}_{\ty{D}}(\{k_j\}^{N_1}_{1}),
\bF^{\tyb{N$_2$}}_{\ty{J}}(k_{N_1+1}),\bF^{\tyb{N$_3$}}_{\ty{J}}(k_{N_1+2}),\cdots,
\bF^{\tyb{N$_s$}}_{\ty{J}}(k_{N_1+(s-1)})
\bigr),\\
&\bH=\mathrm{Diag}\bigl(
\bH^{\tyb{N$_1$}}_{\ty{D}}(\{c_j\}^{N_1}_{1}),
\bH^{\tyb{N$_2$}}_{\ty{J}}(\{c_j\}^{N_1+N_2}_{N_1+1}),
\cdots,
\bH^{\tyb{N$_s$}}_{\ty{J}}(\{c_j\}^{N_1+N_2+\cdots+N_s}_{N_1+N_2+\cdots+N_{s-1}+1})\bigr),
\end{align}
$\bG$ is a symmetric matrix with block structure
\begin{equation}
\bG=\bG^T=(\bG_{i,j})_{s\times s}
\end{equation}
\end{subequations}
and each $\bG_{i,j}$ is a $N_i\times N_j$ matrix.
Noting that
\[\bF\Ga=\Ga\bF,~~ \bH\Ga=\Ga^T\bH,~~ \br=\bF \bI_{\ty{G}},~~ \tc=\bI_{\ty{G}}^T \bH,\]
from \eqref{cond-2c} we reach to
\begin{equation}
\bG\Ga^T +\Ga\bG = \bI_{\ty{G}}\cdot \bI_{\ty{G}}^T.
\end{equation}
In terms of block structure it reads
\begin{subequations}\label{bs-eq}
\begin{align}
& \bG_{1,1} \,\Ga^{\tyb{N$_1$}}_{\ty{D}}(\{k_j\}^{N_1}_{1})+ \Ga^{\tyb{N$_1$}}_{\ty{D}}(\{k_j\}^{N}_{1})\bG_{1,1}
= \bI_{\ty{D}}^{\tyb{N$_1$}}\cdot {\bI_{\ty{D}}^{\tyb{N$_1$}}}^T,
\label{bs-eq-1}\\
& \bG_{1,j} \,{\Ga^{\tyb{N$_j$}}_{\ty{J}}(k_{N_{j-1}+1})}^T+ \Ga^{\tyb{N$_1$}}_{\ty{D}}(\{k_j\}^{N_1}_{1})\,\bG_{1,j}
= \bI_{\ty{D}}^{\tyb{N$_1$}}\cdot {\bI_{\ty{J}}^{\tyb{N$_j$}}}^T,~~~(1<j\leq s),
\label{bs-eq-2}
\\
& \bG_{i,j}\, {\Ga^{\tyb{N$_j$}}_{\ty{J}}(k_{N_{j-1}+1})}^T+ \Ga^{\tyb{N$_i$}}_{\ty{J}}(k_{N_{i-1}+1})\,\bG_{i,j}
= \bI_{\ty{J}}^{\tyb{N$_i$}}\cdot {\bI_{\ty{J}}^{\tyb{N$_j$}}}^T,~~~(1<i\leq j\leq s).
\label{bs-eq-3}
\end{align}
\end{subequations}
Equations for $\bG_{j,1}$ and $\bG_{j,i}$ are the  transpositions of \eqref{bs-eq-2} and \eqref{bs-eq-3}, respectively,
due to $\bG=\bG^T$.
These equations in \eqref{bs-eq} are just the type of the three elementary matrix equations \eqref{ele-eq-1}, \eqref{ele-eq-2}
and \eqref{ele-eq-3}. Thus based on Lemma \ref{L:4-1} we have solutions
\begin{subequations}\label{bs-sol}
\begin{align}
& \bG_{1,1}=\bG^{\tyb{N$_1$}}_{\ty{D}}(\{k_j\}^{N_1}_{1}), \label{bs-sol-1}\\
& \bG_{1,j}=\bG^{\tyb{N$_1$,N$_j$}}_{\ty{DJ}}(\{k_j\}^{N_1}_{1};k_{N_{j-1}+1}),~~~(1<j\leq s), \label{bs-sol-2}\\
& \bG_{i,j}=\bG^{\tyb{N$_i$,N$_j$}}_{\ty{JJ}}(k_{N_{i-1}+1},k_{N_{j-1}+1}),~~~(1<i\leq j\leq s), \label{bs-sol-3}
\end{align}
\end{subequations}
and then from \eqref{M-g} we get $\bM$.

Let us conclude this part as the following.
\begin{Theorem}
\label{T:3}
For the equation set \eqref{cond-2}, i.e.,
\begin{subequations}
\begin{align}
&(p \bI-\Ga)\wt{\br}=(p\bI+\Ga) \br,\label{cond-3a}\\
&(q \bI-\Ga)\wh{\br}=(q\bI+\Ga) \br,\label{cond-3b}\\
& \bM \Ga + \Ga\bM= \br\, \tc, \label{cond-3c}
\end{align}
\label{cond-3}
\end{subequations}
with generic
\begin{equation}
\Ga=\mathrm{Diag}\bigl(\Ga^{\tyb{N$_1$}}_{\ty{D}}(\{k_j\}^{N_1}_{1}),
\Ga^{\tyb{N$_2$}}_{\ty{J}}(k_{N_1+1}),\Ga^{\tyb{N$_3$}}_{\ty{J}}(k_{N_1+2}),\cdots,
\Ga^{\tyb{N$_s$}}_{\ty{J}}(k_{N_1+(s-1)})\bigr) 
\label{Ga-gen-T}
\end{equation}
and
\begin{equation}
\tc=(c_1,c_2,\cdots,c_{N_1},c_{N_1+1},\cdots, c_{N_1+N_2+\cdots+N_s}),
\label{ct-T}
\end{equation}
we have solutions
\begin{subequations}
\label{r-M-g}
\begin{equation}
\br=\left(
\begin{array}{c}
\br_{\ty{D}}^{\tyb{N$_1$}}(\{k_j\}_{1}^{N_1})\\
\br_{\ty{J}}^{\tyb{N$_2$}}(k_{N_1+1})\\
\br_{\ty{J}}^{\tyb{N$_3$}}(k_{N_1+2})\\
\vdots\\
\br_{\ty{J}}^{\tyb{N$_s$}}(k_{N_1+(s-1)})
\end{array}
\right),~~~
\bM=\bF\bG \bH,
\end{equation}
where
\begin{align}
&\bF=\mathrm{Diag}\bigl(
\bF^{\tyb{N$_1$}}_{\ty{D}}(\{k_j\}^{N_1}_{1}),
\bF^{\tyb{N$_2$}}_{\ty{J}}(k_{N_1+1}),\bF^{\tyb{N$_3$}}_{\ty{J}}(k_{N_1+2}),\cdots,
\bF^{\tyb{N$_s$}}_{\ty{J}}(k_{N_1+(s-1)})
\bigr),\label{r-M-g-F}\\
&\bH=\mathrm{Diag}\bigl(
\bH^{\tyb{N$_1$}}_{\ty{D}}(\{c_j\}^{N_1}_{1}),
\bH^{\tyb{N$_2$}}_{\ty{J}}(\{c_j\}^{N_1+N_2}_{N_1+1}),
\cdots,
\bH^{\tyb{N$_s$}}_{\ty{J}}(\{c_j\}^{N_1+N_2+\cdots+N_s}_{N_1+N_2+\cdots+N_{s-1}+1})\bigr),\label{r-M-g-H}
\end{align}
and $\bG$ is a symmetric matrix with block structure
\begin{equation}
\bG=\bG^T=(\bG_{i,j})_{s\times s}
\label{r-M-g-G1}
\end{equation}
with
\begin{equation}
\begin{array}{ll}
\bG_{1,1}=\bG^{\tyb{N$_1$}}_{\ty{D}}(\{k_j\}^{N_1}_{1}),&~\\
\bG_{1,j}=\bG_{j,1}^T=\bG^{\tyb{N$_1$,N$_j$}}_{\ty{DJ}}(\{k_j\}^{N_1}_{1};k_{N_{j-1}+1}),&~~(1<j\leq s), \\
\bG_{i,j}=\bG_{j,i}^T=\bG^{\tyb{N$_i$,N$_j$}}_{\ty{JJ}}(k_{N_{i-1}+1},k_{N_{j-1}+1}),&~~(1<i\leq j\leq s).
\end{array}
\label{r-M-g-G2}
\end{equation}
\end{subequations}
Besides, for $\Ga$, $\tc$, $\br$ and $\bM$ above mentioned,
the pair
\begin{equation}
\{\mathcal{A}\br, \, \mathcal{A}\bM\}
\label{ArM}
\end{equation}
also solve equation set \eqref{cond-2} with same $\Ga$ and $\tc$,
where
\[\mathcal{A}=Diag(\bI_{N_1},\mathcal{A}_2,\mathcal{A}_3,\cdots,\mathcal{A}_s),\]
in which $\bI_{N_1}$ is the $N_1$-th order unit matrix and $\mathcal{A}_j$ is
a $N_j$-th order constant lower triangular Toeplitz matrix.
In fact, $\mathcal{A}\Ga=\Ga\mathcal{A}$.
\end{Theorem}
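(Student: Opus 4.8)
The plan is to verify the three relations \eqref{cond-3a}--\eqref{cond-3c} one at a time, block by block, and then reduce the assertion about $\{\mathcal{A}\br,\mathcal{A}\bM\}$ to a one-line commutation argument; almost everything has in fact already been assembled in the discussion preceding the theorem. First I would settle the two $\br$-equations. Since $\Ga$ in \eqref{Ga-gen-T} and $\br$ in \eqref{r-M-g} are built from matching blocks, \eqref{cond-3a} (and, with $q$ in place of $p$, \eqref{cond-3b}) decouples into one equation per block. On the diagonal block the $i$-th component is just the scalar identity $(p-k_i)\wt\rho_i=(p+k_i)\rho_i$, built into the definition of the plain wave factor. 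On a Jordan block $\Ga^{\tyb{N$_j$}}_{\ty{J}}(\kappa)$ with $\kappa=k_{N_{j-1}+1}$, the relevant piece of $\br$ is $\bigl(\rho,\partial_\kappa\rho/1!,\dots,\partial_\kappa^{N_j-1}\rho/(N_j-1)!\bigr)^T$ with $\rho=\rho(\kappa)$; here I would differentiate the scalar identity $(p-\kappa)\wt\rho(\kappa)=(p+\kappa)\rho(\kappa)$ repeatedly in $\kappa$, use the Leibniz rule together with $\partial_\kappa^2(p\mp\kappa)=0$, divide the $\ell$-th derivative by $\ell!$, and observe that the resulting recursion among the $\partial_\kappa^\ell\rho/\ell!$ is exactly the action of the lower-bidiagonal matrix $p\bI\mp\Ga^{\tyb{N$_j$}}_{\ty{J}}(\kappa)$ on that vector. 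This identifies \eqref{cond-3a} on each Jordan block with a differentiated scalar identity, so it holds.

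For \eqref{cond-3c} with $\bM=\bF\bG\bH$ I would proceed exactly as in Case 2 of Section \ref{sec:4.2.2}. Using $\bF\Ga=\Ga\bF$ (each diagonal block of $\bF$ in \eqref{r-M-g-F} is a lower triangular Toeplitz matrix, hence commutes with the corresponding block of $\Ga$; see item (2) of Proposition \ref{P:2}), $\bH\Ga=\Ga^T\bH$ (item (3) of Proposition \ref{P:2}), and $\br=\bF\bI_{\ty{G}}$, $\tc=\bI_{\ty{G}}^T\bH$, one gets $\bM\Ga+\Ga\bM=\bF(\bG\Ga^T+\Ga\bG)\bH$ and $\br\tc=\bF\bI_{\ty{G}}\bI_{\ty{G}}^T\bH$, so it suffices to solve $\bG\Ga^T+\Ga\bG=\bI_{\ty{G}}\bI_{\ty{G}}^T$. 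Because $\Ga$ is block-diagonal and $\bI_{\ty{G}}$ block-stacked, the $(i,j)$ block of this equation reads $\bG_{i,j}\Ga_j^T+\Ga_i\bG_{i,j}=\bI_i\bI_j^T$, where $\Ga_i,\bI_i$ denote the $i$-th blocks; these are precisely the elementary equations \eqref{ele-eq-1}, \eqref{ele-eq-2}, \eqref{ele-eq-3} of Lemma \ref{L:4-1}, whose explicit solutions are the $\bG_{i,j}$ recorded in \eqref{r-M-g-G2}. The one point needing a line of verification is that setting $\bG_{j,i}:=\bG_{i,j}^T$ for the off-diagonal blocks is consistent: transposing \eqref{ele-eq-3} (resp. \eqref{ele-eq-2}) returns the same elementary equation with the two index labels interchanged, so the symmetric block matrix $\bG$ in \eqref{r-M-g-G1} genuinely solves $\bG\Ga^T+\Ga\bG=\bI_{\ty{G}}\bI_{\ty{G}}^T$, hence $\bM$ solves \eqref{cond-3c}. (If uniqueness is wanted one reads these as Sylvester equations and invokes the non-degeneracy $k_i+k_j\neq0$ of Proposition \ref{P:1}; only existence is needed here.)

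It remains to treat $\{\mathcal{A}\br,\mathcal{A}\bM\}$. Since $\bI_{N_1}$ commutes with $\Ga^{\tyb{N$_1$}}_{\ty{D}}$ and each lower triangular Toeplitz $\mathcal{A}_j$ commutes with $\Ga^{\tyb{N$_j$}}_{\ty{J}}(k_{N_{j-1}+1})$ (item (2) of Proposition \ref{P:2} and the commutativity of lower triangular Toeplitz matrices), the block-diagonal $\mathcal{A}$ satisfies $\mathcal{A}\Ga=\Ga\mathcal{A}$; also $\mathcal{A}$ is constant, so $\wt{(\mathcal{A}\br)}=\mathcal{A}\wt\br$. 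Hence $(p\bI-\Ga)\mathcal{A}\wt\br=\mathcal{A}(p\bI-\Ga)\wt\br=\mathcal{A}(p\bI+\Ga)\br=(p\bI+\Ga)\mathcal{A}\br$, and likewise with $q$, so \eqref{cond-2a}--\eqref{cond-2b} hold for $\mathcal{A}\br$; and $(\mathcal{A}\bM)\Ga+\Ga(\mathcal{A}\bM)=\mathcal{A}(\bM\Ga+\Ga\bM)=\mathcal{A}\br\,\tc=(\mathcal{A}\br)\tc$, so \eqref{cond-2c} holds with the same $\Ga$ and $\tc$. I expect the only genuinely fiddly step to be the Jordan-block verification of the $\br$-equation -- matching the Leibniz/division-by-$\ell!$ bookkeeping against the lower-bidiagonal action of $p\bI\mp\Ga^{\tyb{N$_j$}}_{\ty{J}}(\kappa)$; the $\bM$-part then follows directly from Lemma \ref{L:4-1} and the commutation identities, and the $\mathcal{A}$-part is immediate once $\mathcal{A}\Ga=\Ga\mathcal{A}$ is recorded.
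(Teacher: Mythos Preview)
Your proposal is correct and follows essentially the same route as the paper: the paper also observes that $\br$ solves \eqref{cond-2a}--\eqref{cond-2b} block by block (your Leibniz-rule computation supplies the details the paper omits with ``obviously''), then uses the identities $\bF\Ga=\Ga\bF$, $\bH\Ga=\Ga^T\bH$, $\br=\bF\bI_{\ty{G}}$, $\tc=\bI_{\ty{G}}^T\bH$ to reduce \eqref{cond-3c} to $\bG\Ga^T+\Ga\bG=\bI_{\ty{G}}\bI_{\ty{G}}^T$, breaks this into the block equations \eqref{bs-eq}, and invokes Lemma~\ref{L:4-1}. Your treatment of the $\{\mathcal{A}\br,\mathcal{A}\bM\}$ assertion via $\mathcal{A}\Ga=\Ga\mathcal{A}$ is likewise what the paper indicates (compare Proposition~\ref{P:3} and the final line of the theorem statement), and your added remarks on the consistency of $\bG_{j,i}=\bG_{i,j}^T$ and on Sylvester-type uniqueness are sound elaborations beyond what the paper records.
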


Here, $\{\mathcal{A}\br, \, \mathcal{A}\bM\}$ gives a general solution to \eqref{cond-2} with a generic $\Ga$.
It is then quite easy to write out solutions for special cases.
For example, when
\begin{equation}
\Ga=
\left(\begin{array}{cc}
\Ga^{\tyb{N$_1$}}_{\ty{D}}(\{k_j\}^{N_1}_{1})& \mathbf{0}\\
\mathbf{0} & \Ga^{\tyb{N$_2$}}_{\ty{J}}(k_{N_1+1})
\end{array}\right),
~~\tc=(c_1,c_2,\cdots,c_{N_1+N_2}),
\label{Ga-DJ}
\end{equation}
a solution pair to \eqref{cond-3} are
\begin{subequations}
\label{r-M-1}
\begin{equation}
\br=\left(
\begin{array}{c}
\br_{\ty{D}}^{\tyb{N$_1$}}(\{k_j\}_{1}^{N_1})\\
\br_{\ty{J}}^{\tyb{N$_2$}}(k_{N_1+1})
\end{array}
\right),
\end{equation}
and
\begin{equation}
\bM=\bF\bG \bH=
\bF\left(\begin{array}{cc}
\bG^{\tyb{N$_1$}}_{\ty{D}}(\{k_j\}^{N_1}_{1})& \bG^{\tyb{N$_1$,N$_2$}}_{\ty{DJ}}(\{k_j\}^{N_1}_{1};k_{N_{1}+1})\\
{\bG^{\tyb{N$_1$,N$_2$}}_{\ty{DJ}}(\{k_j\}^{N_1}_{1};k_{N_{1}+1})}^T & \bG^{\tyb{N$_2$}}_{\ty{J}}(k_{N_{1}+1})
\end{array}\right)\bH,
\end{equation}
where
\begin{equation}
\bF=\left(\begin{array}{cc}
\bF^{\tyb{N$_1$}}_{\ty{D}}(\{k_j\}^{N_1}_{1})& \mathbf{0}\\
\mathbf{0} & \bF^{\tyb{N$_2$}}_{\ty{J}}(k_{N_1+1})
\end{array}\right),~~
\bH=\left(\begin{array}{cc}
\bH^{\tyb{N$_1$}}_{\ty{D}}(\{c_j\}^{N_1}_{1})& \mathbf{0}\\
\mathbf{0} & \bH^{\tyb{N$_2$}}_{\ty{J}}(\{c_j\}^{N_1+N_2}_{N_1+1})
\end{array}\right).
\end{equation}
\end{subequations}
A second example, when
\begin{equation}
\Ga=
\left(\begin{array}{cc}
\Ga^{\tyb{N$_1$}}_{\ty{J}}(k_1)& \mathbf{0}\\
\mathbf{0} & \Ga^{\tyb{N$_2$}}_{\ty{J}}(k_{2})
\end{array}\right),
~~\tc=(c_1,c_2,\cdots,c_{N_1+N_2}),
\label{Ga-JJ}
\end{equation}
a solution pair to \eqref{cond-3} are
\begin{subequations}
\label{r-M-2}
\begin{equation}
\br=\left(
\begin{array}{c}
\br_{\ty{J}}^{\tyb{N$_1$}}(k_1)\\
\br_{\ty{J}}^{\tyb{N$_2$}}(k_2)
\end{array}
\right),
\end{equation}
and
\begin{equation}
\bM=\bF\bG \bH=
\bF\left(\begin{array}{cc}
\bG^{\tyb{N$_1$}}_{\ty{J}}(k_1)& \bG^{\tyb{N$_1$,N$_2$}}_{\ty{JJ}}(k_1;k_2)\\
{\bG^{\tyb{N$_1$,N$_2$}}_{\ty{JJ}}(k_1;k_2)}^T & \bG^{\tyb{N$_2$}}_{\ty{J}}(k_2)
\end{array}\right)\bH,
\end{equation}
where
\begin{equation}
\bF=\left(\begin{array}{cc}
\bF^{\tyb{N$_1$}}_{\ty{J}}(k_1)& \mathbf{0}\\
\mathbf{0} & \bF^{\tyb{N$_2$}}_{\ty{J}}(k_2)
\end{array}\right),~~
\bH=\left(\begin{array}{cc}
\bH^{\tyb{N$_1$}}_{\ty{J}}(\{c_j\}^{N_1}_{1})& \mathbf{0}\\
\mathbf{0} & \bH^{\tyb{N$_2$}}_{\ty{J}}(\{c_j\}^{N_1+N_2}_{N_1+1})
\end{array}\right).
\end{equation}
\end{subequations}

\subsection{Symmetric property $S^{(i,j)}=S^{(j,i)}$}
\label{sec:4.3}

The symmetric property $S^{(i,j)}=S^{(j,i)}$ directly leads to the
symmetric property $S(a,b)=S(b,a)$ and symmetric definition of $V(a)$.
Therefore this property is crucial for getting lattice equations from the recurrence
relations \eqref{Sij-shift} and \eqref{Sab-2-re}.
In this subsection we will see such a property $S^{(i,j)}=S^{(j,i)}$ holds for a generic $\Ga$ in canonical form.
For the notations we use below please refer to Sec.\ref{sec:4.2.1}.

\subsubsection{Two basic cases}

We start from two basic cases.

\vskip 5pt
\noindent
\textit{Case 1.}~ $\Ga=\Ga^{\tyb{N}}_{\ty{D}}(\{k_j\}^{N}_{1})$.

Solutions to \eqref{cond-2} of this case are given in \eqref{sol-diag}, i.e.,
\begin{equation}
\br  =\br_{\hbox{\tiny{\it D}}}^{\hbox{\tiny{[{\it N}]}}}(\{k_j\}_{1}^{N}),~~
\bM  =\bF \bG \bH,
\label{sol-diag-2}
\end{equation}
where
\begin{equation}
\bF=\bF^{\tyb{N}}_{\ty{D}}(\{k_j\}^{N}_{1}),~~
\bH=\bH^{\tyb{N}}_{\ty{D}}(\{c_j\}^{N}_{1}),~~
\bG=\bG^{\tyb{N}}_{\ty{D}}(\{k_j\}^{N}_{1}).
\end{equation}
Also noting that
\[\br=\bF\cdot\bI^{\tyb{N}}_{\ty{D}},~~~
\tc={\bI^{\tyb{N}}_{\ty{D}}}^T\cdot \bH,\]
from the definition \eqref{SSV-Sij} we have
\begin{align*}
 S^{(i,j)} & =  \tc\, \Ga^j(\bI+ \bM)^{-1}\Ga^i \br\\
 & ={\bI^{\tyb{N}}_{\ty{D}}}^T\cdot \bH \Ga^j(\bI+ \bF\bG\bH)^{-1}\Ga^i\bF\cdot\bI^{\tyb{N}}_{\ty{D}}.
\end{align*}
Since $\Ga$, $\bF$ and $\bH$ are diagonal we can freely commute them and then have
\begin{equation}
S^{(i,j)}  ={\bI^{\tyb{N}}_{\ty{D}}}^T\cdot \Ga^j((\bH\bF)^{-1} + \bG)^{-1}\Ga^i\cdot\bI^{\tyb{N}}_{\ty{D}}.
\label{Sij-dig}
\end{equation}
Noting that $S^{(i,j)}$ is a scalar function and $\bG=\bG^T$, taking transposition for \eqref{Sij-dig} we immediately reach to
\begin{equation}
S^{(i,j)}= {S^{(i,j)}}^T  ={\bI^{\tyb{N}}_{\ty{D}}}^T\cdot \Ga^i((\bH\bF)^{-1} + \bG)^{-1}\Ga^j\cdot\bI^{\tyb{N}}_{\ty{D}}
=S^{(j,i)},
\end{equation}
i.e., the symmetric property $S^{(i,j)}=S^{(j,i)}$.

\vskip 5pt
\noindent
\textit{Case 2.~} $\Ga=\Ga^{\tyb{N}}_{\ty{J}}(k_1)$.

In this case,  the equation set \eqref{cond-2} admits the following expression:
\begin{subequations}
\label{sol-J}
\begin{equation}
\br=\br_{\ty{J}}^{\tyb{N}}(k_1)=\bF\cdot\bI^{\tyb{N}}_{\ty{J}},~~~
\tc={\bI^{\tyb{N}}_{\ty{J}}}^T\cdot \bH,~~~
\bM=\bF  \bG  \bH,
\end{equation}
where
\begin{equation}
\bF=\bF^{\tyb{N}}_{\ty{J}}(k_1), ~~\bH=\bH^{\tyb{N}}_{\ty{J}}(\{c_j\}^{N}_{1}),~~
\bG=\bG^{\tyb{N}}_{\ty{J}}(k_1).
\end{equation}
\end{subequations}
Recalling Proposition \ref{P:2} we know that
\begin{equation}
 \bG=\bG^T,~~\bH=\bH^T,~~
 \bF\Ga=\Ga\bF,~~ \bH\Ga=\Ga^T\bH,~~ \bH \bF=\bF^T \bH,
\label{commut-2}
\end{equation}
in which the last equality indicates $(\bH \bF)=(\bH \bF)^T$.
With these in hand, inserting \eqref{sol-J} into the definition \eqref{SSV-Sij} we have
\begin{equation}
\begin{array}{rl}
 S^{(i,j)} & =  \tc\, \Ga^j(\bI+ \bM)^{-1}\Ga^i \br\\
 & ={\bI^{\tyb{N}}_{\ty{J}}}^T\cdot \bH \Ga^j(\bI+ \bF\bG\bH)^{-1}\Ga^i\bF\cdot\bI^{\tyb{N}}_{\ty{J}}\\
 & ={\bI^{\tyb{N}}_{\ty{J}}}^T\cdot (\Ga^T)^j \bH (\bI+ \bF\bG\bH)^{-1}\bF\Ga^i\cdot\bI^{\tyb{N}}_{\ty{J}}\\
 & = {\bI^{\tyb{N}}_{\ty{J}}}^T\cdot (\Ga^T)^j((\bH\bF)^{-1} + \bG)^{-1}\Ga^i\cdot\bI^{\tyb{N}}_{\ty{J}}.
\end{array}
\label{Sij-J}
\end{equation}
Then, taking transposition we immediately reach to the symmetric property $S^{(i,j)}=S^{(j,i)}$.

\subsubsection{Generic case}

Based on the above two basic cases, now for the generic case the discussion becomes easy.

Go back to Sec. \ref{sec:4.2.3}. For the equation set \eqref{cond-2} with generic
\begin{equation}
\Ga=\mathrm{Diag}\bigl(\Ga^{\tyb{N$_1$}}_{\ty{D}}(\{k_j\}^{N_1}_{1}),
\Ga^{\tyb{N$_2$}}_{\ty{J}}(k_{N_1+1}),\Ga^{\tyb{N$_3$}}_{\ty{J}}(k_{N_1+2}),\cdots,
\Ga^{\tyb{N$_s$}}_{\ty{J}}(k_{N_1+(s-1)})\bigr),
\label{Ga-gen-T-2}
\end{equation}
one has
\begin{equation}
\br=\bF\cdot\bI_{\ty{G}},~~~
\tc={\bI_{\ty{G}}}^T\cdot \bH,~~~
\bM=\bF  \bG  \bH,
\end{equation}
where $\bI_{\ty{G}}$ is defined in \eqref{RIG}, $\bF$, $\bH$ and $\bG$ are given in \eqref{r-M-g}.
A fact is that here $\Ga$, $\bF$, $\bH$ and $\bG$ satisfy exactly  the same relation \eqref{commut-2},
which means in this case $S^{(i,j)}$ will have a same expression as \eqref{Sij-J},
and that means $S^{(i,j)}=S^{(j,i)}$.

\begin{Theorem}\label{T:4}
For the equation set \eqref{cond-2} with $\Ga$ in canonical form,
$ S^{(i,j)}  =  \tc\, \Ga^j(\bI+ \bM)^{-1}\Ga^i \br$ has the
symmetric property $S^{(i,j)}=S^{(j,i)}$.
\end{Theorem}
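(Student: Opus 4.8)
The plan is to reduce the generic canonical case to the structural identity $S^{(i,j)} = {S^{(i,j)}}^T$ by exhibiting, for an \emph{arbitrary} canonical $\Ga$, a factorization $\br = \bF\cdot\bI$, $\tc = \bI^T\cdot\bH$, $\bM = \bF\bG\bH$ in which the matrices $\bF$, $\bH$, $\bG$ satisfy the four commutation-type relations already isolated in \eqref{commut-2}, namely $\bG = \bG^T$, $\bH = \bH^T$, $\bF\Ga = \Ga\bF$, $\bH\Ga = \Ga^T\bH$, together with $\bH\bF = (\bH\bF)^T$. Once these hold, the computation in \eqref{Sij-J} goes through verbatim: one substitutes the factorization into \eqref{SSV-Sij}, commutes $\Ga^j$ past $\bH$ to the left (using $\bH\Ga = \Ga^T\bH$) and $\Ga^i$ past $\bF$ to the right (using $\bF\Ga = \Ga\bF$), absorbs $\bF$ and $\bH$ into $(\bI + \bF\bG\bH)^{-1}$ to get $((\bH\bF)^{-1} + \bG)^{-1}$, and then transposes the resulting scalar, using $\bG = \bG^T$ and $(\bH\bF) = (\bH\bF)^T$, to swap $i$ and $j$.

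Concretely, first I would record that for $\Ga$ in the generic block-diagonal canonical form \eqref{Ga-gen-T-2}, the explicit solution $\{\br, \bM\}$ of Theorem \ref{T:3} is precisely of the stated factored shape with $\bF$, $\bH$ given block-diagonally in \eqref{r-M-g-F}--\eqref{r-M-g-H} and $\bG$ the symmetric block matrix \eqref{r-M-g-G1}--\eqref{r-M-g-G2}, while $\br = \bF\cdot\bI_{\ty{G}}$ and $\tc = \bI_{\ty{G}}^T\cdot\bH$ with $\bI_{\ty{G}}$ as in \eqref{RIG}. Next I would verify the five relations blockwise. On a diagonal block $\bF\Ga = \Ga\bF$ and $\bH\Ga = \Ga^T\bH$ are immediate since all matrices there are diagonal; on a Jordan block they are items (2) and (3) of Proposition \ref{P:2}, namely $\bF^{\tyb{N}}_{\ty{J}}\in\widetilde G^{\tyb{N}}$ commutes with $\Ga^{\tyb{N}}_{\ty{J}}\in\widetilde G^{\tyb{N}}$, and $\bH^{\tyb{N}}_{\ty{J}}\mathcal{A} = \mathcal{A}^T\bH^{\tyb{N}}_{\ty{J}}$ applied with $\mathcal{A} = \Ga^{\tyb{N}}_{\ty{J}}$. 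Symmetry $\bH = \bH^T$ on each block is item (1) of Proposition \ref{P:2} in the Jordan case and obvious in the diagonal case; $\bH\bF = (\bH\bF)^T$ follows from $\bH\bF = \bF^T\bH = (\bH\bF)^T$ using $\bH\bF = \bF^T\bH$, which is again $\bH^{\tyb{N}}_{\ty{J}}\mathcal{A} = \mathcal{A}^T\bH^{\tyb{N}}_{\ty{J}}$ with $\mathcal{A} = \bF^{\tyb{N}}_{\ty{J}}$. The symmetry $\bG = \bG^T$ is built into Theorem \ref{T:3} directly. Since $\Ga$, $\bF$, $\bH$, $\bG$ are block-diagonal resp.\ block-structured conformably, these five relations hold globally once they hold block-by-block.

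Finally, with \eqref{commut-2} established in the generic case, I would simply transcribe the chain \eqref{Sij-J}: $S^{(i,j)} = \bI_{\ty{G}}^T(\Ga^T)^j((\bH\bF)^{-1} + \bG)^{-1}\Ga^i\bI_{\ty{G}}$, and transposing this scalar (noting $\bI_{\ty{G}}$ is a fixed vector and $((\bH\bF)^{-1} + \bG)$ is symmetric by the two relations just checked) gives $S^{(j,i)}$, so $S^{(i,j)} = S^{(j,i)}$. There is essentially no obstacle beyond bookkeeping: the only mild subtlety is the off-diagonal blocks of $\bG$ — one must confirm that $\bG^{\tyb{N$_1$,N$_j$}}_{\ty{DJ}}$ and $\bG^{\tyb{N$_i$,N$_j$}}_{\ty{JJ}}$ together with their transposes assemble into a genuinely symmetric $\bG$, but this is exactly how $\bG$ was constructed in \eqref{r-M-g-G2}, so the factorization hypothesis of the "Case 2" argument transfers without change. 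The proof is therefore a reduction of the general canonical $\Ga$ to the already-treated block situation, and the main (minor) care is checking the five commutation relations block by block.
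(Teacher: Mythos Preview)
Your proposal is correct and follows essentially the same route as the paper: the paper's proof of the generic case (Section~\ref{sec:4.3}) likewise observes that for the block-diagonal $\Ga$, $\bF$, $\bH$, $\bG$ of Theorem~\ref{T:3} the relations \eqref{commut-2} continue to hold, so the computation \eqref{Sij-J} carries over unchanged and transposing the scalar gives $S^{(i,j)}=S^{(j,i)}$. Your write-up is in fact more explicit than the paper's, spelling out the blockwise verification of the five relations via Proposition~\ref{P:2}, whereas the paper simply asserts this ``fact'' and concludes.
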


\section{Applications}

\subsection{Solutions to some lattice equations}\label{sec:5.1}

Now, for the lattice equations we listed out in Sec.\ref{sec:3.2},
their solutions can be given as follows, respectively:
for the lpKdV equation \eqref{lpKdV-a},
\begin{equation}
w=S^{(0,0)}= \tc (\bI+ \bM)^{-1}\br;\label{lpKdV-c}
\end{equation}
for the lpmKdV equation \eqref{lpmKdV-a},
\begin{equation}
v=1-S^{(0,-1)}= 1-\tc \Ga^{-1} (\bI+\bM)^{-1}\br;
\label{lpmKdV-c}
\end{equation}
for the lSKdV equation \eqref{lSKdV-a},
\begin{equation}
z=\tc \Ga^{-1}(\bI+ \bM)^{-1}\Ga^{-1}\br-z_0-\frac{n}{p}-\frac{m}{q},~~ z_0\in \mathbb{C};
\label{lSKdV-c}
\end{equation}
for the NQC equation \eqref{NQC-eq}
\begin{equation}
S(a,b)=\tc(b\bI+\Ga)^{-1}(\bI+\bM)^{-1}(a\bI+\Ga)^{-1}\br,~~ a,b\in \mathbb{C},
\label{Sab-nqc}
\end{equation}
where $\br$, $\tc$, $\Ga$ and $\bM$ are described as in Theorem \ref{T:3}.

\subsection{Solutions to the lattice equations in  ABS List}\label{sec:5.2}

\subsubsection{Re-parametrisation}\label{sec:5.2.1}

Ref.\cite{Nijhoff-ABS} builds the relation between NQC equation and Q3 equation in the ABS list
as well as the relation between Q3 equation and the `lower' equations Q2, Q1, H3, H2 and H1
(also see \cite{Hietarinta-CAC}).
In Ref.\cite{Nijhoff-ABS} these ABS lattice equations (see Appendix \ref{A:2}) were re-parameterized
so that their solutions can be expressed through Cauchy matrices.
These re-parametrisations are\cite{Nijhoff-ABS}
\begin{equation}
\begin{array}{lll}
Q3: & ~~\mathring{p}=\frac{P}{p^2-a^2}=\frac{p^2-b^2}{P}, &~~\mathring{q}=\frac{Q}{q^2-a^2}=\frac{q^2-b^2}{Q}, \\
Q2,Q1: & ~~\mathring{p}=\frac{a^2}{p^2-a^2}, &~~\mathring{q}=\frac{a^2}{q^2-a^2}, \\
H3: & ~~ \mathring{p}=\frac{P}{a^2-p^2}=\frac{1}{P}, &~~\mathring{q}=\frac{Q}{a^2-q^2}=\frac{1}{Q}, \\
H2,H1: & ~~\mathring{p}=-p^2, &~~\mathring{q}=-q^2.
\end{array}
\end{equation}
And the re-paramaterized lattice equations are
\begin{subequations}
\label{ABS-list-p}
\begin{align}
Q3: &~~P(u\widehat{u}+\widetilde{u}\widehat{\widetilde{u}})
-Q(u\widetilde{u} +\widehat{u}\widehat{\widetilde{u}}) =(p^2-
q^2)\bigg((\widetilde{u}
\widehat{u}+u\widehat{\widetilde{u}})+\frac{\delta^2}{4PQ}\bigg),
\label{Q3-p}\\
Q2: &~~(q^2-a^2)(u-\wh{u})(\wt{u}-\wh{\wt{u}})-(p^2-a^2)(u-\wt{u})(\wh{u}-\wh{\wt{u}})\nonumber\\
&~~~~ + (p^2-a^2)(q^2-a^2)(q^2-p^2)(u+\wt{u}+\wh{u}+\wh{\wt{u}})\nonumber\\
&~~ =(p^2-a^2)(q^2-a^2)(q^2-p^2)[(p^2-a^2)^2+(q^2-a^2)^2-(p^2-a^2)(q^2-a^2)],
\label{Q2-p}\\
Q1: &~~(q^2-a^2)(u-\wh{u})(\wt{u}-\wh{\wt{u}})-
(p^2-a^2)(u-\wt{u})(\wh{u}-\wh{\wt{u}}) =\frac{\delta^2a^4(p^2-q^2)}{(p^2-a^2)(q^2-a^2)},
\label{Q1-p}\\
H3: &~~  P(a^2-q^2)(u\wt{u}+\wh{u}\wh{\wt{u}})-Q(a^2-p^2)
(u\wh{u}+\wt{u}\wh{\wt{u}})= \delta(p^2-q^2), \label{H3-p} \\
H2: &~~ (u-\wh{\wt{u}})(\wt{u}-\wh{u})+(p^2-q^2)
(u+\wt{u}+\wh{u}+\wh{\wt{u}})=p^4-q^4, \label{H2-p}\\
H1: &~~ (u-\wh{\wt{u}})(\wh{u}-\wt{u})=p^2-q^2, \label{H1-p}
\end{align}
\end{subequations}
where in \eqref{Q3-p}  $(p, P)=\mathfrak{p}$ and $(q,Q)=\mathfrak{q}$ are the points on the elliptic curve
\begin{equation}
 \{(x,X)|X^2 =
(x^2-a^2)(x^2-b^2)\},
\label{elliptic-Q3}
\end{equation}
in \eqref{H3-p}
\[P^2=a^2-p^2,~~ Q^2=a^2-q^2,\]
and in Q3 and Q2 the dependent variable $u$ has been scaled by
\[u\to u(b^2-a^2),~~ u\to \frac{a^4 u}{(p^2-a^2)^2 (q^2-a^2)^2},\]
respectively.

\subsubsection{Solutions}

In Ref.\cite{Nijhoff-ABS} solutions to Q3 equation \eqref{Q3-p} is given by means of
the relation between NQC equation and Q3 equation;
solutions for other equations in the list \eqref{ABS-list-p}
are derived by using the degeneration relation
\vspace{-8mm}
\begin{center}
\begin{displaymath}
\xymatrix{ \boxed{Q3} \ar[d] \ar[r] & \boxed{Q2} \ar[d] \ar[r] & \boxed{Q1} \ar[d] \\
\boxed{H3} \ar[r] & \boxed{H2} \ar[r] & \boxed{H1} }
\end{displaymath}
\begin{minipage}{11cm}{\footnotesize~~~~~~~~~~~~~~~~~~~~~~~~~~~~~~~~~
{Fig.1}  Degeneration relation}
\end{minipage}
\end{center}

In the following we list out solution formulae\cite{Nijhoff-ABS} for the lattice equations listed in \eqref{ABS-list-p},
and here we claim that these formulae are valid
for all the solutions $\{\br, \bM\}$ of our starting matrix equation set \eqref{cond-2}
together with corresponding $\Ga$ and $\tc$ which are described in Theorem \ref{T:3}.

For Q3 equation \eqref{Q3-p},
\begin{subequations}
\begin{equation}
u=A\cdot G(a,b)+B\cdot G(a,-b) +C\cdot G(-a,b)+D\cdot G(-a,-b),
\end{equation}
where
\begin{equation}
G(a,b)=\biggl(\frac{P}{(p-a)(p-b)}\biggr)^n \biggl(\frac{Q}{(q-a)(q-b)}\biggr)^m [1-(a+b)S(a,b)],
\end{equation}
$S(a,b)$ is defined in \eqref{SSV-Sab}, and $A,B,C,D$ are arbitrary constants satisfying
\begin{equation}
A\, D(a+b)^2-B\,C(a-b)^2=-\frac{\delta^2}{16 ab}.
\end{equation}
\end{subequations}

For Q2 equation \eqref{Q2-p},
\begin{subequations}
\begin{align}
u=&\gamma(a)\Bigl[\frac{1}{4}(\xi^2+1)+a\xi S(-a,a)+a^2(Z(a,-a)+Z(-a,a))\nonumber\\
&~~~~~~~ +A\, D+\frac{A}{2}\rho(a)(1-2a S(a,a))+\frac{D}{2}\rho(-a)(1+2a S(-a,-a))\Bigr],
\end{align}
where
\begin{align}
& \xi=2a\Bigl(\frac{pn}{a^2-p^2}+\frac{qm}{a^2-q^2}\Bigr)+\xi_0,~~ (\xi_0 \mathrm{~a~ constant}), \label{xi}\\
& \rho(a)=\biggl(\frac{p+a}{p-a}\biggr)^n \biggl(\frac{q+a}{q-a}\biggr)^m \rho_{00},~~ (~\rho_{00} \mathrm{~a~ constant}),\label{rho-a}\\
& \gamma(a)=\frac{a^4 }{(p^2-a^2)^2 (q^2-a^2)^2},
\end{align}
\end{subequations}
$S(a,b)$ and $Z(a,b)$ are defined in \eqref{SSV}, and $A,D$ are arbitrary constants.

For Q1 equation \eqref{Q1-p}, solution with exponential background is
\begin{subequations}
\begin{equation}
u=A\,\rho(a)(1-2a S(a,a))+ D\,\rho(-a)(1+2a S(-a,-a))+B(\xi+2a S(-a,a)),
\end{equation}
where $S(a,b)$ is defined in \eqref{SSV-Sab}, $\xi$ defined in \eqref{xi}, $\rho(a)$ defined in \eqref{rho-a},
and $A,B,D$ are arbitrary constants satisfying
\begin{equation}
16 A\, D+4B^2=\delta^2;
\end{equation}
\end{subequations}
while solution with linear background is
\begin{subequations}
\begin{equation}
u=\delta (\nu^2-2\nu S^{(-1,-1)}+2S^{(-1,-2)})+2A(\nu +c_0-S^{(-1,-1)}),
\end{equation}
where $S^{(i,j)}$ is defined in \eqref{SSV-Sij},
\begin{equation}
\nu=\frac{n}{p}+\frac{m}{q}+\nu_0,~~ (\nu_0 \mathrm{~a~ constant}), \label{nu}
\end{equation}
\end{subequations}
and $A,c_0$ are arbitrary constants.

For H3 equation \eqref{H3-p}, solution is
\begin{subequations}
\begin{equation}
u=(A+(-1)^{n+m}B)\vartheta V(a)+(D+(-1)^{n+m}C)\vartheta^{-1} V(-a),
\end{equation}
where $V(a)$ is defined in \eqref{SSV-V}, $\vartheta$ is defined as
\begin{equation}
\vartheta=\biggl(\frac{P}{a-p}\biggr)^n \biggl(\frac{Q}{a-q}\biggr)^m \vartheta_{00},~~ (~\vartheta_{00} \mathrm{~a~ constant}),
\end{equation}
and $A,B,C,D$ are arbitrary constants satisfying
\begin{equation}
A\, D-B\, C=\frac{-\delta}{4a}.
\end{equation}
\end{subequations}

For H2 equation \eqref{H2-p}, solution is
\begin{subequations}
\begin{equation}
u=\frac{1}{4}\zeta^2-\zeta S^{(0,0)}+2S^{(0,1)}-A^2+(-1)^{n+m}A(\zeta+c_0-2S^{(0,0)}),
\end{equation}
where $S^{(i,j)}$ is defined in \eqref{SSV-Sij}, $\zeta$ is defined as
\begin{equation}
\zeta=pn+qm+\zeta_{0},~~ (~\zeta_{0} \mathrm{~a~ constant}),
\label{zeta}
\end{equation}
and $A,c_0$ are arbitrary constants.
\end{subequations}

For H1 equation \eqref{H1-p}, solution is
\begin{subequations}
\label{H1-u}
\begin{equation}
u=A(\zeta-S^{(0,0)})+(-1)^{n+m}B(\zeta+c_0-2S^{(0,0)}),
\end{equation}
where $S^{(i,j)}$ is defined in \eqref{SSV-Sij}, $\zeta$   defined in \eqref{zeta},
$A,B, c_0$ are arbitrary constants and $A,B$ satisfy
\begin{equation}
A^2-B^2=1.
\end{equation}
\end{subequations}

Here we note that by the transformation
\[w=\zeta-u,\]
the lpKdV equation \eqref{lpKdV-a} is transformed to H1 equation \eqref{H1-p}.
This corresponds to $A=1, B=0$ in \eqref{H1-u}.
Besides, we do not consider A1 and A2 equation in the ABS list  due to simple relations with Q1 and Q3($\delta=0$) equation.

\section{Conclusions}
\label{sec:6}

The usual Cauchy matrix approach starts from known plain wave factor vector $\br$ and known dressed Cauchy matrix $\bM$.
In the paper we have described a generalized Cauchy matrix approach
where the starting point is a matrix equation set with unknown $\br$ and $\bM$.
Such a starting equation set can equally lead to those recurrence relations for
the defined scalar functions $S^{(i,j)}$ and $S(a,b)$.
With regard to solutions, we showed that $\bK$ in the starting equation set provides same solutions
for the related lattice equations as its canonical form $\Ga$ dose.
This enables us to simplify the starting equation set and investigate its solutions
according to the eigenvalue structure of the canonical matrix $\Ga$.
We obtained explicit forms of $\br$ and $\bM$ for a generic canonical matrix $\Ga$.
Besides, we proved the symmetric property of $S^{(i,j)}=S^{(j,i)}$,
which is crucial to generating closed recurrence relations (lattice equations).
As applications, based on Ref.\cite{Nijhoff-ABS} we obtained solution formulae for many
lattice equations,
such as the lpKdV, lpmKdV, lSKdV, NQC, Q3, Q2, Q1, H3, H2 and H1 equation.
Since the starting  equation set admits more choices for $\br$ and $\bM$,
solutions of the above lattice equations are more general than usual solitons.
In fact, in some sense, solutions corresponding to $\Ga$ being a Jordan block
can be considered as a limit result of the case of $\Ga$ being diagonal (cf.\cite{Zhang-KdV,Hietarinta-Bou,Zhang-H1-G}),
or multiple-pole solutions in direct linearization approach.
Besides, usually, zero eigenvalues lead to rational solutions\cite{Zhang-H1Q3-R}.
However, since we need $\Ga$ to satisfy invertible conditions (see Proposition \ref{P:1}),
eigenvalues of $\Ga$ can not be zero and therefore here the obtained solutions do not include rational solutions.

Finally, let us go back to the starting equation set \eqref{cond-1} or \eqref{cond-2},
which consists of three equations.
Reviewing the generalized Cauchy matrix approach, we can see that
the first two equations are used to determine the plain wave factor vector $\br$.
Actually, these two equations include implicitly 2nd-order dispersion relation,
which determines the obtained lattice equations are of the KdV-type (cf.\cite{Nijhoff-ABS}).
If the  dispersion relation is of 3rd-order, one can get lattice Boussinesq type equations\cite{Walker-2011,Tongas,ZZS-2012,FZZ-2012}.
The third equation in the starting equation set is used to define $\bM$.
To find the explicit form of $\bM$ we write it in the dressed form $\bM=\bF\bG\bH$.
This works as well as in the proof of the symmetric property $S^{(i,j)}=S^{(j,i)}$.
Besides, some algebraic properties and technique also make the discussions and proofs neat.

\vskip 20pt
\section*{Acknowledgments}
The paper is originated when the author (Zhang) participated in the DIS-2009 Programme
in the Isaac Newton Institute for Mathematical Sciences.
We are very grateful to Prof. Nijhoff for enthusiastic guiding and discussing in discrete integrable systems.
Our thanks are also extended to Prof. Aktosun for kindly reading the manuscript.
This project is supported by the NSF of China
(No. 11071157), SRF of the DPHE of China (No. 20113108110002),
Shanghai Leading Academic Discipline Project (No. J50101).

{\small
}

\begin{appendix}
\section{Proof of Lemma \ref{L:1-1}}\label{A:1}
There is a nonsingular matrix $T\in \mathbb{C}_{N\times N}$
such that
\begin{equation}
\Ga=T\bK T^{-1},
\end{equation}
where $\Ga=(\gamma_{ij})_{N\times N}$ is an upper triangular matrix with $\gamma_{ii}=k_i$ for $i=1,2,\cdots,N$.
We also denote
\[\bB=T\bA T^{-1}=(\beta_1, \beta_2,\cdots,\beta_N),\]
where $\{\beta_j\}$ are column vectors of $\bB$.
Then \eqref{anti-c} can be rewritten as
\begin{equation}
\Ga \bB+\bB\Ga=\mathbf{0}.
\label{anti-b}
\end{equation}
The first column of \eqref{anti-b} reads
\[(k_1\bI +\Ga)\beta_1=\mathbf{0}.\]
Noting that $|k_1\bI +\Ga|\neq 0$ due to the condition \eqref{kij},
we have $\beta_1=\mathbf{0}$.
Then, with this in hand the second column of \eqref{anti-b} is simplified to
\[(k_2\bI +\Ga)\beta_2=\mathbf{0},\]
which yields $\beta_2=0$.
Repeating this procedure step by step we can successively get
$\beta_3=\beta_4=\cdots=\beta_N=\mathbf{0}$.
This means $\bB=\mathbf{0}$ and then $\bA=\mathbf{0}$.

\section{ABS lattice equations}\label{A:2}

Here we list out lattice equations in ABS list\cite{ABS-2003}. These equations are
\begin{subequations}
\label{ABS-list}
\begin{align}
Q4:~~ &\mathring{p}(u\wt{u}+\wh{u}\wh{\wt{u}})-
\mathring{q}(u\wh{u}+\wt{u}\wh{\wt{u}})\nonumber\\
&~~~~~~~~~~~~~=\frac{\mathring{p}\mathring{Q}-\mathring{q}\mathring{P}}{1-\mathring{p}^2\mathring{q}^2}
\big((\wt{u}\wh{u}+u\wh{\wt{u}})-
\mathring{p}\mathring{q}(1+u\wt{u}\wh{u}\wh{\wt{u}})\big),\label{Q4} \\
Q3:~~ & \mathring{p}(1-\mathring{q}^2)(u\wh{u}+\wt{u}\wh{\wt{u}})-
\mathring{q}(1-\mathring{p}^2)(u\wt{u}+\wh{u}\wh{\wt{u}})\nonumber\\
&~~~~~~~~~~~~~=(\mathring{p}^2-\mathring{q}^2)\bigg((\wt{u}\wh{u}+u\wh{\wt{u}})+
\delta^2\frac{(1-\mathring{p}^2)(1-\mathring{q}^2)}{4\mathring{p}\mathring{q}}\bigg),\label{Q3}\\
Q2:~~ & \mathring{p}(u-\wh{u})(\wt{u}-\wh{\wt{u}})-
\mathring{q}(u-\wt{u})(\wh{u}-\wh{\wt{u}})
+\mathring{p}\mathring{q}(\mathring{p}-\mathring{q})(u+\wt{u}+\wh{u}+\wh{\wt{u}})\nonumber\\
&~~~~~~~~~~~~~=\mathring{p}\mathring{q}
(\mathring{p}-\mathring{q})(\mathring{p}^2-\mathring{p}\mathring{q}+\mathring{q}^2),\label{Q2}\\
Q1:~~ &  \mathring{p}(u-\wh{u})(\wt{u}-\wh{\wt{u}})-
\mathring{q}(u-\wt{u})(\wh{u}-\wh{\wt{u}})
=\delta^2\mathring{p}\mathring{q} (\mathring{q}-\mathring{p}),
\label{Q1}\\
H3:~~ & \mathring{p}(u\wt{u}+\wh{u}\wh{\wt{u}})-
\mathring{q}(u\wh{u}+\wt{u}\wh{\wt{u}})=\delta(\mathring{q}^2-\mathring{p}^2), \label{H3} \\
H2:~~ & (u-\wh{\wt{u}})(\wt{u}-\wh{u})=(\mathring{p}-\mathring{q})
(u+\wt{u}+\wh{u}+\wh{\wt{u}})+\mathring{p}^2-\mathring{q}^2, \label{H2}\\
H1:~~ & (u-\wh{\wt{u}})(\wt{u}-\wh{u})= \mathring{p}-\mathring{q},
\label{H1}\\
A2:~~ & \mathring{p}(1-\mathring{q}^2)(u\wt{u}+\wh{u}\wh{\wt{u}})-
\mathring{q}(1-\mathring{p}^2)(u\wh{u}+\wt{u}\wh{\wt{u}})
-(\mathring{p}^2-\mathring{q}^2)\big(1+u\wt{u}\wh{u}\wh{\wt{u}}\big)=0, \label{A2} \\
A1:~~ & \mathring{p}(u+\wh{u})(\wt{u}+\wh{\wt{u}})-
\mathring{q}(u+\wt{u})(\wh{u}+\wh{\wt{u}})
=\delta^2\mathring{p}\mathring{q} (\mathring{p}-\mathring{q}),
\label{A1}
\end{align}
\end{subequations}
where $\mathring{p}, \mathring{q}$ are spacing parameters, $\delta$ is an arbitrary constant,
and in Q4 equation $(\mathring{p}, \mathring{P})$ and $(\mathring{q}, \mathring{Q})$
are points on the elliptic curve\cite{Hietarinta-CAC}
\[\{(x,X)|  X^2=x^4-\gamma x^2+1\}.\]

\end{appendix}

\end{document}